\newtheorem{thm}{Theorem}[section]
\newtheorem{lem}[thm]{Lemma}
\newtheorem{prop}[thm]{Proposition}
\newtheorem{cor}[thm]{Corollary}
\theoremstyle{definition}
\theoremstyle{remark}
\newtheorem{rem}[thm]{Remark}
\newtheorem{eg}[thm]{Example}
\newtheorem{question}[thm]{Question}
\newenvironment{prf}{{\noindent \textbf{Proof:}\ }}{\hfill $\Box$\\ \smallskip}
\numberwithin{equation}{section}
\newcommand{\sa}{{\rm sa}}
\newcommand{\ti}{\tilde}
\newcommand{\la}{\langle}
\newcommand{\ra}{\rangle}
\newcommand{\RP}{\mathbb{R}_+}
\newcommand{\BC}{\mathbb{C}}
\newcommand{\BN}{\mathbb{N}}
\newcommand{\BR}{\mathbb{R}}
\newcommand{\BT}{\mathbb{T}}
\newcommand{\CC}{\mathcal{C}}
\newcommand{\CF}{\mathcal{F}}
\newcommand{\CL}{\mathcal{L}}
\newcommand{\CI}{\mathcal{I}}
\newcommand{\KI}{\mathfrak{I}}
\newcommand{\KS}{\mathfrak{S}}
\newcommand{\KH}{\mathfrak{H}}
\newcommand{\KP}{\mathfrak{P}}
\newcommand{\ms}{\mathbf{s}}
\newcommand{\me}{\mathrm{e}}
\newcommand{\dist}{\mathrm{d}}
\newcommand{\R}{\mathbf{R}}
\newcommand{\bt}{\mathbf{t}}
\newcommand{\Proj}{\mathcal{P}}
\newcommand{\Cent}{\mathcal{Z}}
\begin{document}

\title[Transition probabilities preservers]{Transition probabilities of normal states determine the Jordan structure of a quantum system}

\author{Chi-Wai Leung \and Chi-Keung Ng \and Ngai-Ching Wong}

\address[Chi-Wai Leung]{Department of Mathematics, The Chinese
	University of Hong Kong, Hong Kong.}
\email{cwleung@math.cuhk.edu.hk}

\address[Chi-Keung Ng]{Chern Institute of Mathematics and LPMC, Nankai University, Tianjin 300071, China.}
\email{ckng@nankai.edu.cn}

\address[Ngai-Ching Wong]{Department of Applied Mathematics, National Sun Yat-sen University,  Kaohsiung, 80424, Taiwan.}
\email{wong@math.nsysu.edu.tw}

%\thanks{The authors are supported by National Natural Science Foundation of China (11471168) and Taiwan MOST grant (102-2115-M-110-002-MY2).}

\date{\today}

\keywords{von Neumann algebras; normal states; transition probability; metric spaces; Jordan $^*$-isomorphisms}

%\subjclass[2010]{Primary: ; Secondary: }

\begin{abstract}
Let $\Phi:\mathfrak{S}(M_1)\to \mathfrak{S}(M_2)$ be a bijection (not assumed  affine nor continuous) between the sets of normal states of two quantum systems, modelled on the self-adjoint parts of  von Neumann algebras $M_1$ and $M_2$, respectively.
This paper concerns with the situation when $\Phi$ preserves (or partially preserves) one of the following three notions of ``transition probability'' on the normal state spaces: the Uhlmann transition probability $P_U$,
the Raggio transition probability $P_B$ and an ``asymmetric transition probability'' $P_0$ as defined in this article.

It is shown that the two systems are isomorphic, i.e.\
$M_1$ and $M_2$ are Jordan $^*$-isomorphic,
if $\Phi$ preserves all pairs with zero Uhlmann (respectively, Raggio or asymmetric) transition probability, i.e., for any normal states $\mu$ and $\nu$, we have
$$
P\big(\Phi(\mu),\Phi(\nu)\big) = 0 \quad \text{if and only if} \quad P(\mu,\nu)=0,
$$
where $P$ stands for $P_U$ (respectively, $P_R$ or $P_0$).
Furthermore, as an extension of Wigner's theorem, it is shown that there is a Jordan $^*$-isomorphism $\Theta:M_2\to M_1$ with
$$\Phi = \Theta^*|_{\mathfrak{S}(M_1)}$$
if and only if $\Phi$ preserves the ``asymmetric transition probability''.
This is also equivalent to $\Phi$ preserving the Raggio transition probability.
Consequently, if $\Phi$ preserves the Raggio transition probability, it will  preserve the Uhlmann transition probability as well.
As another application, the sets of normal states equipped with either the usual metric, the Bures metric or ``the metric induced by the self-dual cone'' are complete Jordan $^*$-invariants for the underlying von Neumann algebras.
\end{abstract}

\maketitle

\section{Introduction}

%\medskip

Let  $\KH_1$ and $\KH_2$ be two (complex) Hilbert spaces and $T:\KH_1\to \KH_2$ be a bijective map (not assumed  linear nor continuous).
Wigner's theorem states that if  $T$  preserves the transition probability, in the sense that
$$|\la T(\xi), T(\eta)\ra|^2 = |\la \xi, \eta\ra |^2 \qquad (\xi,\eta\in \KH_1),$$
then there exist  a unitary or an anti-unitary $S:\KH_1\to \KH_2$ and a function $f:\KH_2\to \BT$ such that $T(\xi) = f(\xi)S(\xi)$ ($\xi\in \KH_1$).
 Uhlhorn's theorem, as a generalization of Wigner's theorem, states that if $\dim \KH_1\geq 3$
and $T$  preserves pairs with zero transition probability,
in the sense that
$$\la T(\xi), T(\eta)\ra = 0 \quad \text{if and only if} \quad \la \xi, \eta\ra =0 \qquad (\xi,\eta\in \KH_1),$$
then there exist  a unitary or an anti-unitary $S$ and a function $g:\KH_2\to \BC\setminus \{0\}$ such that $T(\xi) = g(\xi)S(\xi)$ ($\xi\in \KH_1$).

%\medskip

Let $A$ be a (complex) $C^*$-algebra and $\mu, \nu\in A^*$ be pure states of $A$.
The \emph{transition probability} between $\mu$ and $\nu$ is defined to be the quantity
$$P(\mu, \nu):=\mu(\ms_\nu),$$
where $\ms_\nu$ is the support projection of $\nu$ in $A^{**}$.
It is well-known that $P(\mu,\nu) = P(\nu,\mu)$, i.e., $\mu(\ms_\nu) = \nu(\ms_\mu)$, for pure states $\mu$ and $\nu$ (see e.g.\ \cite{AS03}).
Suppose that $\pi:A\to \CL(\KH)$ is a $^*$-representation of $A$ and $\xi\in \KH$, we denote, as usual,
\begin{equation}\label{eqt:def-omega-xi}
\omega_\xi(x):= \la \pi(x)\xi, \xi\ra \qquad (x\in A).
\end{equation}
In the case when $A=\CL(\KH)$ and $\pi:\CL(\KH)\to \CL(\KH)$ is the default representation, the functionals $\omega_\xi$ and $\omega_\eta$ are pure
normal states of $\CL(\KH)$ (where $\xi,\eta\in \KH$) and we have
\begin{equation*}%\label{eqt:tran-prob}
P(\omega_\xi, \omega_\eta) = |\la \xi, \eta\ra|^2.
\end{equation*}
In this setting, Wigner's (respectively, Uhlhorn's) theorem can be interpreted as structural results concerning bijections between pure normal state spaces of $\CL(\KH_1)$ and $\CL(\KH_2)$ that preserve (respectively, partially preserve) the transition probability, and several proofs are given (see e.g. \cite{Geh14} or  \cite[Theorem 1]{Shultz98}).
They have also been extended to the setting of indefinite inner product spaces by Moln\'{a}r (see \cite[Theorem 1]{Mol00} and \cite[Corollary 1]{Mol02}).
Through our study we will also give another proof for Wigner's theorem (see Corollary \ref{cor:Wigner}).

%\medskip

On the other hand, Shultz provided a throughout study of transition probability preserving bijections
between pure state spaces of general $C^*$-algebras.
Under some extra conditions, such maps are induced by
the dual maps of algebraic or Jordan $^*$-isomorphisms of the $C^*$-algebras (see e.g.,  \cite{AS01, AS03, Shultz98} for details).
Related considerations of maps between pure state spaces of $C^*$-algebras preserving transition probability or other properties
can also be found in, e.g., \cite{Brown92, Lab,  Stormer63}.

%\medskip

However, the pure state setting of transition probability is inappropriate to be adapted to the case of von Neumann algebras.
Unlike  $\CL(\KH)$,
a general von Neumann algebra may not have any pure normal state at all.
Therefore, people are looking for suitable notions of transition probability on the
space $\KS(M)$ of all normal states on a von Neumann algebra $M$ (see e.g. \cite{AU00, BS14, Rag84, Schmud13, Uhlmann11}).
Here, by a normal state on $M$, we means a norm one positive normal linear functional on $M$, and it is different from the notion of ``physical states'' as introduced in \cite{Christen-82}.

%\medskip

Let  $\R(M)$ denote the collection of all (unitary equivalence classes of) faithful unital $^*$-representations of
a von Neumann algebra $M$.
For any $\mu, \nu\in \KS(M)$ and $(\KH, \pi)\in \R(M)$, we set $\KH(\mu):= \{\xi\in \KH:  \omega_\xi = \mu\}$ (could be empty).
The quantity
$$P_U(\mu, \nu):= \sup\{ |\la \xi, \eta\ra|^2: \xi\in \KH(\mu), \eta\in \KH(\nu), (\KH, \pi)\in \R(M)\}$$
is well-defined and is called the \emph{Uhlmann transition probability} of $\mu$ and $\nu$ (\cite{Uhlmann76}).
The Uhlmann transition probability is related to the so-called \emph{Bures distance} $\dist_B$ through the formula
\begin{equation}\label{eqt:def-d-B}
\dist_B(\mu, \nu) := \sqrt{2 - 2 \sqrt{P_U(\mu, \nu)}}.
\end{equation}
This metric $\dist_B$ is in general different from the usual distance $\dist_1$ on $\KS(M)$ given by
$$%\begin{equation}\label{eqt:def-d-1}
\dist_1(\mu, \nu):=\|\mu - \nu\|.
$$%\end{equation}

%\medskip

In \cite{Rag}, Raggio defined another transition probability.
Suppose that $(M, \KH, \KP, J)$ is the standard form for $M$ as in \cite{Haag-st-form} (see Section 2 below for a brief exploration).
By \cite[Lemma 2.10]{Haag-st-form}, for any $\mu\in \KS(M)$, there is a unique $\xi_\mu\in \KP$ satisfying
\begin{equation}\label{eqt:nor-st-P}
\mu = \omega_{\xi_\mu}.
\end{equation}
If $\mu, \nu\in \KS(M)$, the positive real number
$$P_R(\mu, \nu):= \la \xi_\mu,\xi_\nu\ra$$	
is called the \emph{Raggio transition probability} of $\mu$ and $\nu$.
As in the Uhlmann case, the Raggio transition probability induces a metric on $\KS(M)$ by
\begin{equation}\label{eqt:Rag-tran-prob<->d2}
\dist_2(\mu, \nu) := \sqrt{2 - 2 \sqrt{P_R(\mu, \nu)}} \qquad (\mu,\nu\in \KS(M)).
\end{equation}
This metric coincides with the one induced from $\KH$, namely,
\begin{equation}\label{eqt:d2<->Hil}
\dist_2(\mu,\nu) = \|\xi_\mu - \xi_\nu\| \qquad (\mu, \nu\in \KS(M)).
\end{equation}
In \cite[Corollary 1]{Rag}, the following relation between the Raggio and the Uhlmann transition probabilities was presented:
\begin{equation}\label{eqt:cf-Rag-Uhlm}
P_U(\mu, \nu) \leq P_R(\mu, \nu) \leq P_U(\mu, \nu)^{1/2} \qquad (\mu, \nu\in \KS(M)).
\end{equation}

%\medskip

In addition, there is a more na\"{i}ve extension of the ``transition probability'':
\begin{equation}\label{eqt:defn-asym-tran-prob}
P_0(\mu, \nu) := \mu(\ms_\nu) \qquad (\mu,\nu\in \KS(M)).
\end{equation}
\emph{Strictly speaking, $P_0$ is not a transition probability,} because unlike the two extensions above, $P_0$ is asymmetric, and $P_0(\mu,\nu)=1$ is equivalent to $\ms_\mu\leq \ms_\nu$ instead of $\mu=\nu$ (c.f. \cite[p.325]{Rag84}).
Nevertheless, abusing the language, we still call $P_0$ the ``\emph{asymmetric transition probability}''.
It seems to be conceptual clear and technically easier to work with it.

%\medskip

Notice that two normal states $\mu,\nu\in \KS(M)$ are orthogonal, i.e., having orthogonal support projections, exactly when
they have zero transition probability in any (and equivalently, all) of the above three settings (see \eqref{eqt:orth<->van-tran-prob} in Section 3).

%\medskip

The main concern of this article is on those bijections (not assumed  affine nor continuous) from the normal state space of one von Neumann algebra to that of another preserving either one of the three transition probabilities above.
We obtain
two analogues of Wigner's theorem for bijections between normal state spaces of general quantum systems (which are modelled on self-adjoint elements of von Neumann algebras).
Furthermore, several weak analogues of Uhlhorn's theorem for normal state spaces of general quantum systems were also obtained.

More precisely, it is shown that a bijection between normal state spaces preserving either the ``asymmetric transition probability'' (as defined in \eqref{eqt:defn-asym-tran-prob}) or the Raggio transition probability was shown to be induced by a Jordan $^*$-isomorphism  (see Theorems \ref{thm:pre-tran-prop>jord-isom}(b) and \ref{thm:Raggio-tran-prop<->jord-isom}).
The result concerning the ``asymmetric transition probability'' can be regarded as an extension of the original Wigner's theorem because of Corollary \ref{cor:Wigner}.
Moreover, we verified that the normal state space equipped with either the Uhlmann transition probability, the Raggio transition probability or the ``asymmetric transition probability'', completely identifies the underlying quantum system (see Theorems \ref{thm:pre-tran-prop>jord-isom} and \ref{thm:st-sp-met-inv}).
Consequently, bijections between normal state spaces preserving the Raggio transition probability will preserve the Uhlmann transition probability (see Corollary \ref{cor:pre-Raggio>pre-Uhlmann}).

This study highlighted the importance of the Raggio and the Uhlmann transition probability in quantum mechanics and it also established a strong relation between these two notions of transition probability.
On the other hand, the notion of ``asymmetric transition probability'' that defined in \eqref{eqt:defn-asym-tran-prob} seems to be conceptually clearer and easier to implement in physics, although it is not strictly speaking a transition probability.
Actually, Theorem \ref{thm:pre-tran-prop>jord-isom}(b) implies that the datum of measurements of observables associated with support projections of states at all other states is sufficient to determine the  quantum system completely.

In developing our main results, we also obtained that several metric spaces associated with the sets of normal states of von Neumann algebras (without any algebraic structure) are complete Jordan $^*$-invariants for the underlying algebras (see Corollary \ref{cor:comp-Jord-inv}).
%\bigskip

\section{Notations and Preliminaries}

%\medskip

Throughout this article, $M$, $M_1$ and $M_2$ are (complex) von Neumann algebras.
We denote by $\KS(M)$ and $\Proj(M)$ the normal state space of $M$  and the set of all projections in $M$, respectively.

Let $(M, \KH, \KP, J)$ denote the (unique) standard form of a von Neumann algebra $M$ (see \cite{Haag-st-form}).
In other words, $\KH$ is a (complex) Hilbert space with $M$ being a (unital) von Neumann subalgebra of $\CL(\KH)$, $J$ is a conjugate linear isometric involution on $\KH$ and $\KP\subseteq \KH$ is a cone which is \emph{self-dual}, in the sense that
$$\KP = \{\eta\in \KH: \la \eta,\xi\ra \geq 0, \text{ for any }\xi\in \KP\},$$
such that the following conditions hold:
\begin{enumerate}
\item $JMJ=M'$,
\item $JcJ = c^*$ for any $c\in \Cent(M)$,
\item $J\xi = \xi$ for any $\xi\in \KP$,
\item $aa^\bt(\KP) \subseteq \KP$ for any $a\in M$,
\end{enumerate}
where $M'$ is the commutant of $M$  in $\CL(\KH)$, $\Cent(M) := M\cap M'$ and $a^\bt:=JaJ$.
We put
%\begin{equation}\label{eqt:def-S-H}
$$
S_{\KH}:= \{\xi\in \KH: \|\xi\| = 1\} \quad \text{and} \quad S_\KP := \KP\cap S_\KH.
$$

%\medskip

\begin{rem}\label{rem:WOT=weak*}
Suppose that $\{x_i\}_{i\in \KI}$ is a net in $M$ that WOT-converges to $x\in M$, when considered as operators in $\CL(H)$.
As $\omega_\xi(x_i) \to \omega_\xi(x)$ ($\xi\in \KP$) and $\{\omega_\xi:\xi\in \KP\} = M_*^+$, we know that $\{x_i\}_{i\in \KI}$ weak$^*$-converges to $x$.
Thus, the WOT on $M\subseteq \CL(\KH)$ coincides with the weak$^*$-topology.
In particular, if $\{e_i\}_{i\in \KI}$ is an increasing net in $\Proj(M)$ with $e_i\uparrow e_0\in \Proj(M)$, then $ e_i y e_i \stackrel{w^*}{\longrightarrow} e_0 y e_0$  ($y\in M$),
because $\omega_\xi(e_i y e_i) = \la y e_i \xi, e_i\xi\ra \to \omega_\xi(e_0 y e_0)$ ($\xi\in \KP$).
\end{rem}

%\medskip

The following proposition contains some known results.
Note that Part (b) of it  inherits from \cite[Theorem 2.2]{CKLW}, while part (a) can be regarded as a result of Dye, because all the ingredients for its proof are already in \cite{Dye} (and a similar discussion can be found in \cite{Sher}, although it is not explicitly stated there).
We give a simple proof for part (b) here, so that we have a complete elementary proof for Corollary \ref{cor:Wigner} below.

%\medskip

\begin{prop}\label{prop:Dye}
Let $M_1$ and $M_2$ be two von Neumann algebras.
Suppose that there is an \emph{orthoisomorphism} $\Gamma: \Proj(M_1)\to \Proj(M_2)$, i.e. $\Gamma$ is bijective, and for any $p,q\in \Proj(M_1)$,
$$p\,q=0  \quad \text{if and only if}\quad \Gamma(p) \Gamma(q) = 0. $$

\noindent
(a) (Dye) $M_1$ and $M_2$ are Jordan $^*$-isomorphic.

\noindent
(b) If $\Gamma$ extends to a bounded linear map $\ti \Gamma:M_1^\sa\to M_2^\sa$, then $\ti \Gamma$ is a Jordan isomorphism.
\end{prop}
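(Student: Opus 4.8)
The plan is to prove that $\ti\Gamma$ preserves squares, i.e.\ $\ti\Gamma(x^2)=\ti\Gamma(x)^2$ for every $x\in M_1^\sa$, and that it is bijective. A linear bijection between the self-adjoint parts of two von Neumann algebras that preserves squares is a Jordan isomorphism: polarizing the identity $\ti\Gamma\big((x+y)^2\big)=\ti\Gamma(x+y)^2$ and using $\ti\Gamma(x^2)=\ti\Gamma(x)^2$, $\ti\Gamma(y^2)=\ti\Gamma(y)^2$ gives $\ti\Gamma(xy+yx)=\ti\Gamma(x)\ti\Gamma(y)+\ti\Gamma(y)\ti\Gamma(x)$, i.e.\ $\ti\Gamma$ preserves the Jordan product $x\circ y:=\tfrac12(xy+yx)$. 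Since the only data available are linearity, boundedness, and the orthoisomorphism property of $\Gamma$ on projections, the natural tool is the spectral theorem: every $x\in M_1^\sa$ is a norm-limit of ``simple'' elements $s=\sum_{k=1}^{n}\lambda_k p_k$ with $\lambda_k\in\BR$ and $p_1,\dots,p_n\in\Proj(M_1)$ pairwise orthogonal and nonzero, obtained by discarding the vanishing spectral projections of $x$ associated with a fine partition of its spectrum.

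First I would record the trivial fact $\Gamma(0)=0$: from $0\cdot q=0$ for all $q\in\Proj(M_1)$ we get $\Gamma(0)\,\Gamma(q)=0$, and since $\Gamma$ is onto $\Proj(M_2)$ this forces $\Gamma(0)\perp\Gamma(0)$; together with injectivity of $\Gamma$ this yields $\Gamma(p)\neq 0$ whenever $p\neq 0$. (One checks similarly $\Gamma(1)=1$, so $\ti\Gamma$ is unital, though this is not strictly needed.) Now for a simple element $s=\sum_k\lambda_k p_k$ as above, linearity gives $\ti\Gamma(s)=\sum_k\lambda_k\,\Gamma(p_k)$, where the $\Gamma(p_k)$ are again pairwise orthogonal nonzero projections in $M_2$. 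Hence $s^2=\sum_k\lambda_k^2 p_k$ and $\ti\Gamma(s)^2=\sum_k\lambda_k^2\,\Gamma(p_k)=\ti\Gamma(s^2)$, while also $\|\ti\Gamma(s)\|=\max_k|\lambda_k|=\|s\|$. Since the simple elements are norm-dense in $M_1^\sa$, $\ti\Gamma$ is norm-continuous (being bounded), and the square map is norm-continuous on bounded sets, passing to the limit along $s_n\to x$ gives $\ti\Gamma(x^2)=\ti\Gamma(x)^2$ and $\|\ti\Gamma(x)\|=\|x\|$ for all $x\in M_1^\sa$.

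Finally, being isometric, $\ti\Gamma$ is injective and has norm-closed range; this range is a closed real-linear subspace of $M_2^\sa$ containing $\ti\Gamma(\Proj(M_1))=\Gamma(\Proj(M_1))=\Proj(M_2)$, hence it contains the norm-closed linear span of $\Proj(M_2)$, which is all of $M_2^\sa$. Thus $\ti\Gamma$ is a linear bijection preserving squares, i.e.\ a Jordan isomorphism. I expect the steps needing most care to be the reduction to simple elements (one must be sure the approximating projections can be chosen pairwise orthogonal, which the spectral theorem of a \emph{single} self-adjoint element supplies) together with the continuity passage, and the remark that an isometric linear map automatically has closed range, which is exactly what upgrades ``the range contains every projection'' to ``the range is everything''.
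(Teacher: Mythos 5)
Your argument for part (b) is correct and follows essentially the same route as the paper's: verify $\ti\Gamma(a^2)=\ti\Gamma(a)^2$ on real linear combinations of pairwise orthogonal projections and then pass to the norm limit using density of finite-spectrum elements, with your isometry/closed-range discussion supplying the bijectivity that the paper leaves implicit. Note only that your proposal does not touch part (a) (Dye's theorem), for which the paper gives a separate argument splitting off the type $I_2$ summand and invoking Dye's corollary.
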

\begin{prf}
(a) By \cite[Lemma 1]{Dye}, the bijection $\Gamma$ is an order isomorphism that sends central projections to central projections.
Let $\me_k$ be the central projection in $M_k$ such that $\me_k M_k$ is the type $I_2$ part of $M_k$ ($k=1,2$).

We first show that $\Gamma(\me_1)= \me_2$ and $(1-\me_1) M_1$ is Jordan $^*$-isomorphic to $(1-\me_2) M_2$.
In fact, as $\Gamma$ is an order isomorphism with $\Gamma(1-\me_1)\in \Cent(M_2)$, it restricts to an orthoisomorphism  from $\Proj((1-\me_1) M_1)$ onto $\Proj(\Gamma(1-\me_1)M_2)$.
The absence of nonzero type $I_2$ summand in $(1-\me_1)M_1$ and the Corollary in \cite[p.~83]{Dye}
ensure that $\Gamma|_{\Proj((1-\me_1)M_1)}$ extends to a Jordan $^*$-isomorphism from $(1-\me_1) M_1$ onto $\Gamma(1-\me_1)M_2$.
Hence, $\Gamma(1-\me_1)M_2$ does not have a nonzero type $I_2$ summand neither.
This means $\Gamma(1-\me_1)\me_2 = 0$, or equivalently,
$$\Gamma(1-\me_1)\leq 1-\me_2.$$
Similarly, $\Gamma^{-1}(1-\me_2)\leq 1-\me_1$.
By \cite[Lemma 1]{Dye}, one has $1 - \Gamma(\me_1)= \Gamma(1-\me_1) = 1-\me_2$.

It remains to show that $\me_1 M_1$ is Jordan $^*$-isomorphic to $\me_2 M_2$.
Indeed, because $\Gamma(\me_1) =\me_2$, the map $\Gamma$ restricts to an orthoisomorphism from
$\Proj(\me_1M_1)$ onto $\Proj(\me_2M_2)$.
Since
$$\Gamma\big(\Cent(\me_1M_1)\cap \Proj(\me_1M_1)\big) = \Cent(\me_2M_2)\cap \Proj(\me_2M_2),$$
$\Gamma$ induces an orthoisomorphism from $\Proj(\Cent(\me_1M_1))$ onto $\Proj(\Cent(\me_2M_2))$, and the Corollary in \cite[p.~83]{Dye} implies that $\Cent(\me_1M_1)$ is $^*$-isomorphic to $\Cent(\me_2M_2)$.
The conclusion now follows from the fact that $\me_kM_k = \Cent(\me_kM_k)\otimes \mathrm{M}_2(\BC)$ ($k=1,2$).

\noindent
(b) Let $n\in \BN$ and $r_1,...,r_n\in \BR$.
Consider $\{p_1,...,p_n\}$ to be a set of orthogonal elements in $\Proj(M_1)$.
If we denote $a:= \sum_{k=1}^n r_k p_k$, then we have
$$\ti \Gamma(a^2) = \sum_{k=1}^n r_k^2 \Gamma(p_k) = \ti \Gamma\Big(\sum_{k=1}^n r_kp_k\Big)^2$$
(the last equality uses the fact that $\Gamma$ preserves orthogonality).
Since elements with finite spectrum is norm-dense in $M_1^\sa$, the continuity of $\ti \Gamma$ ensures that $\ti \Gamma(x^2) = \ti \Gamma(x)^2$ ($x\in M_1^\sa$), as claimed.
\end{prf}

%\medskip

Our next proposition is also well-known (although we do not find the second conclusion explicitly stated anywhere), and we will only give a brief account for it.

%\medskip

\begin{prop}\label{prop:J-preserves-U-R-TP}
Let $\Theta: M_2\to M_1$ be a Jordan $^*$-isomorphism.
Then $\Theta^*(\KS(M_1)) = \KS(M_2)$ and $\Theta^*|_{\KS(M_1)}$  preserves both the Raggio and the Uhlmann transition probabilities.
\end{prop}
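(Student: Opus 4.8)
I would argue the two assertions separately; the first is soft, while the second rests on the functoriality of the standard form together with the symmetry $M'=JMJ$ of its two canonical subalgebras. To begin, recall that a Jordan $^*$-isomorphism $\Theta:M_2\to M_1$ is unital and positive, and, being an order isomorphism, preserves suprema of bounded increasing nets of positive elements; hence it is weak$^*$-continuous. Consequently, for $\mu\in\KS(M_1)$ the functional $\Theta^*(\mu)=\mu\circ\Theta$ is positive and normal with $\|\Theta^*(\mu)\|=\mu(\Theta(1))=1$, so $\Theta^*(\KS(M_1))\subseteq\KS(M_2)$. Applying this to the Jordan $^*$-isomorphism $\Theta^{-1}$ and using $(\Theta^{-1})^*=(\Theta^*)^{-1}$ on the dual spaces yields $\Theta^*(\KS(M_1))=\KS(M_2)$.

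\textbf{Raggio transition probability.} Write $(M_k,\KH_k,\KP_k,J_k)$ for the standard form of $M_k$. By the functoriality of the standard form \cite{Haag-st-form}, the Jordan $^*$-isomorphism $\Theta^{-1}:M_1\to M_2$ induces a unitary $V:\KH_1\to\KH_2$ with $V\KP_1=\KP_2$, $VJ_1=J_2V$, $V\xi_\mu=\xi_{\Theta^*(\mu)}$ for all $\mu\in\KS(M_1)$, and $V\{M_1,M_1'\}V^*=\{M_2,M_2'\}$ (a possibly interchanged pair of subalgebras of $\CL(\KH_2)$). When $\Theta$ is a genuine $^*$-isomorphism one simply takes $V$ to implement $\Theta^{-1}$ spatially; the general Jordan case reduces to this and the $^*$-anti-isomorphic case, which is handled via the identification $N^{\mathrm{op}}\cong JNJ$ of the opposite algebra inside the standard form, the relation $J\xi=\xi$ on $\KP$ ensuring that $V$ still sends canonical state vectors to canonical state vectors. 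Since $V$ is unitary, for $\mu,\nu\in\KS(M_1)$,
$$P_R\big(\Theta^*\mu,\Theta^*\nu\big)=\la\xi_{\Theta^*\mu},\xi_{\Theta^*\nu}\ra=\la V\xi_\mu,V\xi_\nu\ra=\la\xi_\mu,\xi_\nu\ra=P_R(\mu,\nu).$$

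\textbf{Uhlmann transition probability.} Here I would start from the standard-form description of the Uhlmann transition probability (see \cite{Rag}): $P_U(\mu,\nu)=\sup\{|\la a'\xi_\mu,\xi_\nu\ra|^2:a'\in M',\ \|a'\|\le 1\}$, which holds because in the standard form the vectors representing $\mu$ are precisely the $v'\xi_\mu$ for suitable partial isometries $v'\in M'$, and the standard form already realizes the supremum defining $P_U$. Since $J\xi_\mu=\xi_\mu$ and $J\xi_\nu=\xi_\nu$, one computes $\la a^\bt\xi_\mu,\xi_\nu\ra=\overline{\la a\xi_\mu,\xi_\nu\ra}$ for $a\in M$, and as $J$ is isometric with $M'=JMJ$, this supremum is unchanged if $M'$ is replaced by $M$. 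Transporting through the unitary $V$ of the previous step then gives, for $\mu,\nu\in\KS(M_1)$,
\begin{align*}
P_U\big(\Theta^*\mu,\Theta^*\nu\big)&=\sup\{|\la a'\xi_{\Theta^*\mu},\xi_{\Theta^*\nu}\ra|^2:a'\in M_2',\ \|a'\|\le 1\}\\
&=\sup\{|\la (V^*a'V)\xi_\mu,\xi_\nu\ra|^2:a'\in M_2',\ \|a'\|\le 1\}=P_U(\mu,\nu),
\end{align*}
because $V^*M_2'V$ is one of $M_1'$, $M_1$, and in either case the resulting supremum equals $P_U(\mu,\nu)$ by the $M\leftrightarrow M'$ symmetry just recorded. (An alternative route, avoiding the standard-form formula, is to decompose $\Theta=\alpha\oplus\beta$ along a central projection with $\alpha$ a $^*$-isomorphism and $\beta$ a $^*$-anti-isomorphism, to transport a faithful unital representation of $M_1$ to one of $M_2$ by precomposing with $\alpha$ on the first summand and passing to the conjugate Hilbert space on the second, to correct the two summands by independent unimodular scalars so that the moduli of the relevant inner products match, and then to take suprema, using the same construction for $\Theta^{-1}$ to obtain the reverse inequality.) In either approach the main obstacle is the same, and is the only non-formal point: one must invoke the classical structure theorem that a Jordan $^*$-isomorphism is the direct sum of a $^*$-isomorphism and a $^*$-anti-isomorphism, and then dispose of the $^*$-anti-isomorphic summand --- via $N^{\mathrm{op}}\cong JNJ$ in the standard form, or via conjugate Hilbert spaces. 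Everything else is routine manipulation with unitaries and vector states.
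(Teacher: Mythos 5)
Your proof is correct, and for the first assertion and for the Raggio part it runs essentially parallel to the paper: normality of $\Theta$ gives $\Theta^*(\KS(M_1))=\KS(M_2)$, and the uniqueness/functoriality of the standard form, combined with the identification of the opposite algebra with $JMJ=M'$ inside the same standard form, yields a unitary $V$ with $V\KP_1=\KP_2$ and $V\xi_\mu=\xi_{\Theta^*\mu}$, whence $P_R$ is preserved. Where you genuinely diverge is the Uhlmann part. The paper stays with the definition of $P_U$ as a supremum over all faithful representations: it decomposes $\Theta$ via Kadison's central projection into a $^*$-isomorphism and a $^*$-anti-isomorphism and notes that these induce bijections of the sets $\R(M_k)$ (passing to conjugate representations on the anti-isomorphic summand), so the supremum is preserved tautologically. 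You instead invoke the standard-form formula $P_U(\mu,\nu)=\sup\{|\la a'\xi_\mu,\xi_\nu\ra|^2:a'\in M',\ \|a'\|\le 1\}$ and exploit the $M\leftrightarrow M'$ symmetry coming from $JMJ=M'$ and $J\xi=\xi$ on $\KP$. This buys a uniform, purely spatial computation treating $P_R$ and $P_U$ through the single unitary $V$, at the price of importing a nontrivial known fact (that the standard representation already attains the Uhlmann supremum), which is legitimate with the citation but is a heavier external input than the paper needs; your parenthetical alternative is exactly the paper's argument. One small point you should make explicit: in the genuinely mixed case $V^*M_2'V$ is not literally $M_1'$ or $M_1$ but the central direct sum $f_1M_1'f_1\oplus(1-f_1)M_1(1-f_1)$ (where $f_1$ is Kadison's central projection, for which $f_1^\bt=f_1$); the supremum splits along this decomposition (the phases of the two summands can be adjusted independently) and the $M\leftrightarrow M'$ symmetry applies summand-wise, so the conclusion stands, but the dichotomy as stated is slightly too coarse.
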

\begin{prf}
By \cite[Theorem 10]{Kad51}, there is a central projection $f_2\in \Proj(M_2)\cap \Cent(M_2)$ such that $\Theta$ restricts to a $^*$-isomorphism $\Theta_i: f_2M_2 \to f_1M_1$ as well as a $^*$-anti-isomorphism $\Theta_a: (1-f_2)M_2 \to (1-f_1)M_1$, where $f_1:=\Theta(f_2)$.
For $k\in \{1,2\}$, one has
$$\KS(M_k) = \left\{\big(t\mu_k, (1-t)\nu_k\big): t\in [0,1]; \mu_k\in \KS(f_kM_k); \nu_k\in \KS\big((1-f_k)M_k\big)\right\}$$
and $\R(M_k) = \R(f_kM_k) \times \R\big((1-f_k)M_k\big)$ in the canonical way.

Since all $^*$-isomorphisms and $^*$-anti-isomorphisms between von Neumann algebras are isometric and weak$^*$-continuous, the same is true for Jordan $^*$-isomorphisms and we have $\Theta^*(\KS(M_1)) = \KS(M_2)$.

Note that $\Theta_i$ and $\Theta_a$ induce
canonical bijections from $\R(f_1M_1)$ onto $\R(f_2M_2)$ and from $\R\big((1-f_1)M_1\big)$ onto $\R\big((1-f_2)M_2\big)$,
respectively.
Indeed,   if $(\pi, \KH)$ is a $^*$-representation of $(1-f_1)M_1$ and $(\bar \pi, \bar \KH)$ is its conjugate representation, then $(\bar \pi \circ \Theta_a, \bar \KH)$ is a $^*$-representation of $(1-f_2)M_2$.
Now, it follows  from the definition of $P_U$ that $\Theta^*|_{\KS(M_1)}$ preserves $P_U$.

On the other hand, if $(M_k, \KH_k, \KP_k, J_k)$ is the standard form of $M_k$ ($k=1,2$), then
the uniqueness of the standard form ensures that $(M_k, \KH_k, \KP_k, J_k)$ decomposes canonically with respect to $f_k$ and $1-f_k$.
Moreover, if $(M, \KH, \KP, J)$ is the standard form of a von Neumann algebra $M$, then $x\mapsto Jx^*J$ is a $^*$-anti-isomorphism from $M$ to $M'$,
and $(M',\KH, \KP, J)$ is the standard form of $M'$.
From these, one can check easily that $\Theta^*|_{\KS(M_1)}$ preserves $P_R$.
\end{prf}

%\bigskip

\section{The main results}

%\medskip

Set $\Proj_\sigma(M) := \{\ms_\mu: \mu\in \KS(M)\}$.
For any $p\in \Proj(M)$, it follows from  Zorn's Lemma that there is an orthogonal family $\{p_i\}_{i\in \KI}$ in $\Proj_\sigma(M)$   satisfying
\begin{equation}\label{eqt:proj-decomp}
p=\sum_{i\in \KI} p_i
\end{equation}
(the convergence is taken in the weak$^*$-topology).
We write
\begin{equation*}\label{eqt:defn-F0}
F_0(p) := \{\nu\in \KS(M): \nu(p) = 0\}.
\end{equation*}
Obviously, $F_0(p)$ coincides with the closed face $(1-p)\KS(M)(1-p)\cap \KS(M)$ of $\KS(M)$.
Moreover, since
\begin{align}\label{eqt:orthog-support}
\nu(\ms_\mu) = 0\quad\text{if and only if}\quad \ms_\nu\ms_\mu = 0,
\end{align}
we have
\begin{equation}\label{eqt:rel-F-mu-bot}
F_0(\ms_\mu)\ =\ \{\nu\in \KS(M): \ms_\nu \ms_\mu = 0\} \qquad (\mu\in \KS(M)).
\end{equation}
If $p = \sum_{i\in \KI} p_i$ is as in \eqref{eqt:proj-decomp}, then
\begin{equation}\label{eqt:F0-decomp}
F_0(p) = \bigcap_{i\in \KI} F_0(p_i).
\end{equation}
See, e.g., \cite{Stormer68}, for more explorations between projections and their associated faces.
%\medskip

We say that a map $\Phi: \KS(M_1) \to \KS(M_2)$ is \emph{biorthogonality preserving} if for any $\mu,\nu\in\KS(M_1)$, one has
$$\ms_\mu \ms_\nu = 0 \quad \text{if and only if} \quad \ms_{\Phi(\mu)}\ms_{\Phi(\nu)} = 0.$$

%\medskip

\begin{lem}\label{lem:st-sp-to-proj}
Let $M_1$ and $M_2$ be two von Neumann algebras.
Suppose that $\Phi: \KS(M_1) \to \KS(M_2)$ is a biorthogonality preserving bijection.

\noindent
(a) There exists an orthoisomorphism $\check \Phi:\Proj(M_1)\to \Proj(M_2)$ such that
\begin{equation}\label{eqt:full-rel-Psi-F0}
\Phi(F_0(p)) = F_0(\check \Phi(p)) \quad (p\in \Proj(M_1)) \quad \text{and} \quad \check \Phi(\ms_\mu) = \ms_{\Phi(\mu)} \quad (\mu\in \KS(M_1)).
\end{equation}

\noindent
(b) If $M_1$ or $M_2$ does not have a type $I_2$ summand, then $\check \Phi$ extends to a Jordan isomorphism from $M_1^\sa$ to $M_2^\sa$.

\noindent
(c) If $\Theta:M_2\to M_1$ is a Jordan $^*$-isomorphism satisfying $\Phi = \Theta^*|_{\KS(M_1)}$, then
$$
\Phi(\nu) (\ms_{\Phi(\mu)}) = \nu(\ms_\mu)\quad (\mu, \nu\in \KS(M_1)), \quad\text{and}\quad \check \Phi = \Theta^{-1}|_{\Proj(M_1)}.
$$
\end{lem}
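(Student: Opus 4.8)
The plan is to prove the three assertions of Lemma~\ref{lem:st-sp-to-proj} in order, building the orthoisomorphism $\check\Phi$ from the face structure, then bootstrapping part (a) into part (b) via Dye's theorem and into part (c) by a direct computation.

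For part (a), the key observation is that the map $p\mapsto F_0(p)$ is an order-reversing bijection between $\Proj(M)$ and the closed faces of $\KS(M)$, with $F_0(p)=F_0(q)$ iff $p=q$; this is classical (see \cite{Stormer68}) and I would only cite it. The problem is that $\Phi$ is merely a biorthogonality-preserving bijection, so a priori it only ``sees'' the projections in $\Proj_\sigma(M_1)$, i.e.\ the support projections of normal states. The idea is: first define $\check\Phi$ on $\Proj_\sigma(M_1)$ by $\check\Phi(\ms_\mu):=\ms_{\Phi(\mu)}$, checking this is well-defined because $\ms_\mu=\ms_\nu$ iff $\mu$ and $\nu$ have exactly the same orthogonal normal states iff (by biorthogonality) $\Phi(\mu)$ and $\Phi(\nu)$ do, iff $\ms_{\Phi(\mu)}=\ms_{\Phi(\nu)}$ --- here \eqref{eqt:rel-F-mu-bot} and \eqref{eqt:orthog-support} do the bookkeeping. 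Then for a general $p\in\Proj(M_1)$, write $p=\sum_{i\in\KI}p_i$ as in \eqref{eqt:proj-decomp} with $p_i\in\Proj_\sigma(M_1)$, and note that by \eqref{eqt:F0-decomp}, $F_0(p)=\bigcap_i F_0(p_i)$. Since $\Phi$ is a bijection carrying each $F_0(p_i)$ onto $F_0(\check\Phi(p_i))$ (this intermediate fact itself needs \eqref{eqt:rel-F-mu-bot}: a normal state $\nu$ lies in $F_0(\ms_\mu)$ iff $\ms_\nu\ms_\mu=0$ iff $\ms_{\Phi(\nu)}\ms_{\Phi(\mu)}=0$ iff $\Phi(\nu)\in F_0(\ms_{\Phi(\mu)})$), we get $\Phi(F_0(p))=\bigcap_i F_0(\check\Phi(p_i))=F_0\big(\bigvee_i\check\Phi(p_i)\big)$, and I would set $\check\Phi(p):=\bigvee_i\check\Phi(p_i)$. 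One must check this does not depend on the choice of decomposition --- but two decompositions of $p$ give the same face $F_0(p)$, hence the same image face under $\Phi$, hence the same supremum by injectivity of $q\mapsto F_0(q)$. Applying the same construction to $\Phi^{-1}$ gives a two-sided inverse, so $\check\Phi$ is bijective; orthogonality is preserved because $pq=0$ iff $F_0(p)\cup F_0(q)$ ``spans'' everything in the appropriate sense --- more concretely, $pq=0$ iff every normal state of $M_1$ is a convex combination living in $F_0(p)$ and $F_0(q)$ directions, which one phrases cleanly via $\ms_\mu\le 1-p$ or $\ms_\mu\le 1-q$; I expect to argue $pq=0 \iff \check\Phi(p)\check\Phi(q)=0$ by reducing to the $\Proj_\sigma$ case using the decompositions and the biorthogonality hypothesis directly. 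This verification that $\check\Phi$ is an orthoisomorphism on all of $\Proj(M_1)$, not just the states' supports, is the step I expect to be the main obstacle, since it requires knowing that orthogonality of arbitrary projections is detected by orthogonality of the $\sigma$-finite pieces in their decompositions.

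Part (b) is then immediate: given the orthoisomorphism $\check\Phi$ from (a), if $M_1$ or $M_2$ has no type $I_2$ summand, the Corollary in \cite[p.~83]{Dye} (already invoked in the proof of Proposition~\ref{prop:Dye}) says $\check\Phi$ extends to a Jordan $^*$-isomorphism $M_1\to M_2$, and restricting to self-adjoint parts gives the claimed Jordan isomorphism $M_1^\sa\to M_2^\sa$. (One should note the type $I_2$ hypothesis is exactly what makes Dye's extension theorem applicable, since the two-by-two matrix case is the classical exception.)

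For part (c), suppose $\Theta:M_2\to M_1$ is a Jordan $^*$-isomorphism with $\Phi=\Theta^*|_{\KS(M_1)}$. Since Jordan $^*$-isomorphisms preserve the Jordan triple product and in particular send projections to projections and preserve orthogonality, $\Theta^{-1}$ restricts to an orthoisomorphism $\Proj(M_1)\to\Proj(M_2)$. I would compute the support projection of $\Theta^*\mu$ directly: for $\mu\in\KS(M_1)$ and $p\in\Proj(M_2)$, one has $(\Theta^*\mu)(p)=\mu(\Theta(p))$, so $(\Theta^*\mu)(p)=0 \iff \mu(\Theta(p))=0 \iff \Theta(p)\le 1-\ms_\mu \iff p\le 1-\Theta^{-1}(\ms_\mu)$ (using that $\Theta^{-1}$ is an order isomorphism on projections, which follows from Proposition~\ref{prop:Dye}(a)-style reasoning or directly from \cite[Lemma~1]{Dye}); this says precisely $\ms_{\Theta^*\mu}=\Theta^{-1}(\ms_\mu)$. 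Hence $\ms_{\Phi(\mu)}=\Theta^{-1}(\ms_\mu)$, which is the second displayed equality $\check\Phi=\Theta^{-1}|_{\Proj(M_1)}$ (matching part (a)'s $\check\Phi(\ms_\mu)=\ms_{\Phi(\mu)}$). Finally, $\Phi(\nu)(\ms_{\Phi(\mu)})=(\Theta^*\nu)\big(\Theta^{-1}(\ms_\mu)\big)=\nu\big(\Theta(\Theta^{-1}(\ms_\mu))\big)=\nu(\ms_\mu)$, which is the first displayed equality. This part is routine once the support-projection transformation formula is established, and I anticipate no difficulty there beyond being careful that $\Theta^{-1}$ (being a Jordan $^*$-isomorphism) genuinely is order-preserving on projections.
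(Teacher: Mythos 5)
Your proposal is correct and follows essentially the same route as the paper: define $\check\Phi$ on support projections via $\ms_\mu\mapsto\ms_{\Phi(\mu)}$, extend to all of $\Proj(M_1)$ through orthogonal decompositions into $\sigma$-finite pieces and the injectivity of $p\mapsto F_0(p)$, check orthogonality by reducing to the $\Proj_\sigma$ case, invoke Dye's corollary for (b), and compute $\ms_{\Theta^*\mu}=\Theta^{-1}(\ms_\mu)$ for (c). The only point to tidy is that in (c) the identity $\check\Phi=\Theta^{-1}|_{\Proj(M_1)}$ on \emph{all} projections still requires passing through the decomposition \eqref{eqt:proj-decomp} and the weak$^*$-continuity of $\Theta^{-1}$, not just the formula on support projections.
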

\begin{proof}
(a) We denote by $\CF(M_k)$ the set of all closed faces of $\KS(M_k)$ ($k=1,2$).
The bijectivity of $\Phi$ and \eqref{eqt:rel-F-mu-bot} tell us that $\Phi$ is biorthogonality preserving if and only if
\begin{equation}\label{eqt:Rel-F0-Psi}
\Phi(F_0(\ms_\mu)) = F_0(\ms_{\Phi(\mu)}) \qquad (\mu\in \KS(M_1)).
\end{equation}
Let $p\in \Proj(M_1)$, and $p:= \sum_{i\in \KI} \ms_{\mu_i}$ be a decomposition as in \eqref{eqt:proj-decomp} for a family $\{\mu_i\}_{i\in \KI}$ in $\KS(M_1)$ with its elements having disjoint support projections.
By  the hypothesis, elements in $\{\Phi(\mu_i)\}_{i\in \KI}$ have disjoint support projections, and hence $\sum_{i\in \KI} \ms_{\Phi(\mu_i)}$ converges in the weak$^*$-topology to a projection $\check \Phi(p)\in \Proj(M_2)$.
Since $\Phi$ is injective, \eqref{eqt:F0-decomp} and \eqref{eqt:Rel-F0-Psi} imply
\begin{equation*}
\Phi(F_0(p)) = \bigcap_{i\in \KI} \Phi(F_0(\ms_{\mu_i})) = \bigcap_{i\in \KI} F_0(\ms_{\Phi(\mu_i)}) = F_0\big(\sum_{i\in \KI} \ms_{\Phi(\mu_i)}\big) = F_0(\check \Phi(p)).
\end{equation*}
Moreover, the map $F_0:p\mapsto F_0(p)$ is a bijection from $\Proj(M_k)$ onto $\CF(M_k)$ for $k=1,2$ (\cite[Theorem 3.35]{AS03}).
These show that $\check \Phi(p)$ is independent of the choice of $\{\mu_i\}_{i\in \KI}$, and that $\Phi$ induces a map $\Phi^\CF: \CF(M_1) \to \CF(M_2)$.

In the same way,  $\Phi^{-1}$ induces a map from $\CF(M_2)$ to $\CF(M_1)$ which is clearly the inverse of $\Phi^\CF$.
Therefore, $\Phi^\CF$ is a bijection, and the bijectivity of the map $\check\Phi:\Proj(M_1)\to \Proj(M_2)$ follows from the bijectivity of $F_0$.

Suppose now that $p,q\in \Proj(M_1)$ satisfying $pq = 0$.
Then for any $p',q'\in \Proj_\sigma(M_1)$ with $p'\leq p$ and $q'\leq q$, one has $p'q'= 0$.
Hence, from the hypothesis concerning $\Phi$ and the definition of $\check \Phi$, we conclude  that $\check \Phi(p)\check \Phi(q) = 0$.
Again, by considering $\Phi^{-1}$, we know that $\check\Phi$ is an orthoisomorphism.

\noindent
(b) This follows from the Corollary in \cite[p.~83]{Dye}.

\noindent
(c) By second equality in \eqref{eqt:full-rel-Psi-F0},
$$
\mu(\Theta(\check{\Phi}(\ms_\mu))) = \mu(\Theta(\ms_{\Phi(\mu)})) = \Theta^*(\mu)(\ms_{\Theta^*(\mu)}))=1.
$$
Thus, $\ms_\mu\leq \Theta(\check{\Phi}(\ms_\mu)) = \Theta (\ms_{\Theta^*(\mu)})$.
Conversely, as $\mu(\ms_\mu) =1$, one has $\Theta^*(\mu)(\Theta^{-1}(\ms_\mu)) = 1$, which means that $\ms_{\Theta^*(\mu)} \leq \Theta^{-1}(\ms_\mu)$.
These give
\begin{equation}\label{eqt:ms-mu-Theta}
\ms_\mu = \Theta(\check{\Phi}(\ms_\mu)) = \Theta (\ms_{\Theta^*(\mu)}),
\end{equation}
and hence $\Phi(\nu) (\ms_{\Phi(\mu)}) = \nu(\ms_\mu)$.
On the other hand, due to the construction of $\check \Phi$ in the argument for part (a), Equality \eqref{eqt:ms-mu-Theta} also produces the second conclusion.
\end{proof}

%\medskip

%The next theorem is one of our main results.

%\medskip

\begin{thm}\label{thm:pre-tran-prop>jord-isom}
Let $M_1$ and $M_2$ be von Neumann algebras, and let $\Phi:\KS(M_1)\to \KS(M_2)$ be a bijection.

\noindent
(a) If $\Phi$ is   biorthogonality preserving, then $M_1$ and $M_2$ are Jordan $^*$-isomorphic.

\noindent
(b) There is a Jordan $^*$-isomorphism $\Theta:M_2\to M_1$ satisfying
$\Phi = \Theta_*|_{\KS(M_1)}$ if and only if $\Phi$ preserves the ``asymmetric transition probability'' $P_0$, i.e.,
$$P_0\big(\Phi(\mu),\Phi(\nu)\big) = P_0(\mu, \nu)  \qquad (\mu, \nu\in \KS(M_1)).$$
\end{thm}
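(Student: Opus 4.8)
The plan is to treat parts (a) and (b) largely through Lemma \ref{lem:st-sp-to-proj} together with Dye's result (Proposition \ref{prop:Dye}). For part (a), the point is simply that a biorthogonality preserving bijection $\Phi$ is exactly the hypothesis needed to invoke Lemma \ref{lem:st-sp-to-proj}(a): it produces an orthoisomorphism $\check\Phi:\Proj(M_1)\to\Proj(M_2)$, and then Proposition \ref{prop:Dye}(a) (Dye's theorem) immediately gives a Jordan $^*$-isomorphism $M_1\cong M_2$. So part (a) is essentially a one-line deduction once the lemma is in hand.

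For part (b), one direction is Proposition \ref{prop:J-preserves-U-R-TP} combined with Lemma \ref{lem:st-sp-to-proj}(c): if $\Phi=\Theta_*|_{\KS(M_1)}$ for a Jordan $^*$-isomorphism $\Theta:M_2\to M_1$, then $\Theta^*=\Theta_*$ maps $\KS(M_1)$ bijectively onto $\KS(M_2)$, and Lemma \ref{lem:st-sp-to-proj}(c) gives precisely $\Phi(\nu)(\ms_{\Phi(\mu)})=\nu(\ms_\mu)$, i.e.\ $P_0(\Phi(\mu),\Phi(\nu))=P_0(\mu,\nu)$. The substantive direction is the converse: assume $\Phi$ preserves $P_0$. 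First I would observe that $P_0(\mu,\nu)=0 \Leftrightarrow \ms_\mu\ms_\nu=0$ by \eqref{eqt:orthog-support}, so a $P_0$-preserving bijection is in particular biorthogonality preserving; hence Lemma \ref{lem:st-sp-to-proj}(a) applies and we get the orthoisomorphism $\check\Phi$ with $\check\Phi(\ms_\mu)=\ms_{\Phi(\mu)}$. By Dye (Proposition \ref{prop:Dye}(a)) there is a Jordan $^*$-isomorphism; but we need more — we need a Jordan $^*$-isomorphism $\Theta:M_2\to M_1$ that actually implements $\Phi$ via its predual map, and $\check\Phi$ alone does not obviously do this (Dye only gives existence of \emph{some} Jordan isomorphism extending the orthoisomorphism, away from the type $I_2$ part where extension can fail). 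The key is to upgrade the projection-level information carried by the exact value $P_0$, not just its vanishing.

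The main obstacle, then, is reconstructing $\Theta$ from the quantitative data. The strategy I would pursue: for $\mu\in\KS(M_1)$ and $p\in\Proj(M_1)$, the number $\mu(p)$ can be recovered by writing $p=\sum_i\ms_{\mu_i}$ as in \eqref{eqt:proj-decomp} and using $P_0$ — however $\mu(p)=\sum_i\mu(\ms_{\mu_i})=\sum_i P_0(\mu,\mu_i)$ need not hold because the $\ms_{\mu_i}$ need not sum the mass of $\mu$ unless $p\geq\ms_\mu$ on the relevant part; more carefully, $\mu(p)$ for a \emph{fixed} $\mu$ is determined by $\{P_0(\mu,\nu):\ms_\nu\le p\}$ via $\mu(p)=\sup\{P_0(\mu,\nu):\nu\in\KS(M_1),\ \ms_\nu\le p\}$ is false in general too, so instead I would use: $\mu(p)=1\Leftrightarrow \ms_\mu\le p\Leftrightarrow F_0(1-p)\ni\mu$, which is purely order-theoretic and already handled by $\check\Phi$, and then recover intermediate values of $\mu(p)$ from the spectral/convexity structure. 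Concretely, I expect the cleanest route is: show $\Phi$ preserves the face structure with multiplicities finely enough that $\check\Phi$, extended by Dye off the type $I_2$ part and handled directly ($2\times 2$ matrices over an abelian algebra) on the type $I_2$ part, yields a Jordan $^*$-isomorphism $\Theta_0:M_1\to M_2$; then verify $\Theta_0^{\,*}{}^{-1}$ and $\Phi$ agree on $\KS(M_1)$ by checking they induce the same value $\mu(p)$ for every $\mu$ and every $p$ — using $P_0$-preservation to pin down $\Phi(\mu)(\check\Phi(p))=\mu(p)$, first for $p\in\Proj_\sigma(M_1)$ (where $\mu(p)=\mu(\ms_\nu)=P_0(\mu,\nu)$ directly, with $p=\ms_\nu$) and then for general $p$ by weak$^*$-continuity along the decomposition \eqref{eqt:proj-decomp} and Remark \ref{rem:WOT=weak*}. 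Finally, set $\Theta:=\Theta_0^{-1}:M_2\to M_1$; the identity $\Phi(\mu)(q)=\mu(\Theta(q))$ for all projections $q\in M_2$, hence by linearity and weak$^*$-density for all $q\in M_2$, says exactly $\Phi=\Theta_*|_{\KS(M_1)}$. The delicate points to watch are the type $I_2$ summand (where orthoisomorphisms need not come from Jordan isomorphisms without the extra hypothesis in Lemma \ref{lem:st-sp-to-proj}(b), so one uses $P_0$ more heavily there, e.g.\ to detect the coordinatewise structure of $\mathrm{M}_2(\BC)$-valued functions via transition values strictly between $0$ and $1$) and the passage from "agreement on $\Proj_\sigma$" to "agreement on all of $\KS$", which rests on the fact that every normal state is determined by its values on support projections of normal states together with suitable weak$^*$-limits.
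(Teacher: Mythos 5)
Parts (a) and the forward implication of (b) are fine and match the paper. The substantive converse direction of (b) is where your proposal has a genuine gap, and it stems from a misstep early on: you dismiss the identity $\mu(p)=\sum_{i}\mu(\ms_{\mu_i})=\sum_i P_0(\mu,\mu_i)$ for a decomposition $p=\sum_i\ms_{\mu_i}$ as in \eqref{eqt:proj-decomp}, claiming it ``need not hold''. It does hold, with no hypothesis relating $p$ to $\ms_\mu$: the partial sums increase to $p$ in the weak$^*$-topology and $\mu$ is normal, so $\mu(p)=\sum_i\mu(\ms_{\mu_i})$. This identity is exactly what makes the quantitative $P_0$ data usable, and discarding it forces you onto the Dye route, where you then run into the type $I_2$ obstruction. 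Your treatment of that obstruction (``use $P_0$ more heavily there, e.g.\ to detect the coordinatewise structure \dots via transition values strictly between $0$ and $1$'') is not an argument; nothing in the proposal actually produces a Jordan $^*$-isomorphism on the type $I_2$ summand, nor shows that whatever Dye produces off that summand is implemented by $\Phi$ rather than merely extending $\check\Phi$ as an orthoisomorphism. There is also an internal inconsistency: the verification step you describe at the end (``pin down $\Phi(\mu)(\check\Phi(p))=\mu(p)$ \dots for general $p$ by weak$^*$-continuity along the decomposition'') is precisely the identity you rejected two sentences earlier.

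The paper's proof avoids Dye entirely in part (b) and thereby never meets the type $I_2$ issue. Using the identity above, one gets $\Phi(\mu)\bigl(\check\Phi(p)\bigr)=\mu(p)$ for every $p\in\Proj(M_1)$ and every $\mu$, hence $\check\Phi$ extends to a well-defined linear map on the real span $M_1^0$ of $\Proj(M_1)$ satisfying $\nu\bigl(\check\Phi(x)\bigr)=\Phi^{-1}(\nu)(x)$ for all $\nu\in\KS(M_2)$ (well-definedness comes from surjectivity of $\Phi$ and the fact that normal states separate points). This forces $\|\check\Phi(x)\|=\|x\|$, so $\check\Phi$ extends by norm-density of $M_1^0$ in $M_1^\sa$ to a surjective linear isometry $M_1^\sa\to M_2^\sa$; Proposition \ref{prop:Dye}(b) — which is elementary and has no type $I_2$ exclusion, since it only needs a \emph{bounded linear} extension of an orthoisomorphism — then shows this extension is a Jordan isomorphism, and its inverse is the desired $\Theta$. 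To repair your proposal, reinstate the identity $\mu(p)=\sum_i P_0(\mu,\mu_i)$ and build the linear extension directly from it; the Dye-plus-type-$I_2$ detour is then unnecessary.
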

\begin{prf}
(a)  This follows directly from Lemma \ref{lem:st-sp-to-proj}(a) and Proposition \ref{prop:Dye}(a).

\noindent
(b) Suppose that such a Jordan $^*$-isomorphism $\Theta$ exists.
Then Lemma \ref{lem:st-sp-to-proj}(c) tells us that $P_0\big(\Phi(\nu),\Phi(\mu)\big) = P_0(\nu, \mu)$ ($\mu, \nu\in \KS(M_1)$).

For the converse implication, we first note that because $\mu(\ms_\nu) = \Phi(\mu)(\ms_{\Phi(\nu)})$ ($\mu, \nu\in \KS(M)$), the map $\Phi$ is biorthogonality preserving (see \eqref{eqt:orthog-support}).
Consider $\check \Phi: \Proj(M_1)\to \Proj(M_2)$ to be the map as in Lemma \ref{lem:st-sp-to-proj}.
Let $M_1^0$ be the real linear span of $\Proj(M_1)$ in $M_1$.
We want to extend $\check \Phi$ to $M_1^0$ by setting
$$
\check \Phi(x) := \sum_{k=1}^n r_k \check \Phi(p_k),
$$
when $x=\sum_{k=1}^n r_k p_k$ for some $n\in \BN$, $r_1,...,r_n\in \BR$ and $p_1,...,p_n\in \Proj(M_1)$.
To show that this extension is well-defined, let us consider $p_k = \sum_{i \in \KI_k} \ms_{\mu_{k,i}}$ to be a decomposition as in \eqref{eqt:proj-decomp} ($k=1,...,n$).
By the construction of $\check \Phi$ in the proof of Lemma \ref{lem:st-sp-to-proj}(a), for any $\mu\in \KS(M_1)$,
$$\Phi(\mu)\Big(\sum_{k=1}^n r_k \check\Phi(p_k)\Big)
\ =\ \sum_{k=1}^n r_k \sum_{i\in \KI_k} \Phi(\mu)\big(\ms_{\Phi(\mu_{k,i})}\big)
\ =\ \sum_{k=1}^n r_k \sum_{i\in \KI_k} \mu\big(\ms_{\mu_{k,i}}\big)
\ =\ \mu(x).$$
Thus, the surjectivity of $\Phi$ implies that $\check \Phi(x)$ is independent of the choices of $r_1,...,r_n$ nor $p_1,...,p_n$.
Obviously, $\check\Phi$ is a linear map on $M_1^0$ satisfying
\begin{equation}\label{eqt:ext-ti-Phi}
\nu\big(\check \Phi(x)\big) = \Phi^{-1}(\nu)(x)  \qquad (x\in M_1^0; \nu\in \KS(M_2)).
\end{equation}
This implies $\|\check \Phi(x)\| = \|x\|$ $(x\in M_1^0)$, and $\check \Phi$ extends to a linear isometry from $M_1^\sa$ onto $M_2^\sa$.
By Proposition \ref{prop:Dye}(b), this extension, and hence its inverse $\Theta:M_2^\sa\to M_1^\sa$ is a Jordan isomorphism.
Furthermore, Relation \eqref{eqt:ext-ti-Phi} tells us that $\Phi = \Theta_*|_{\KS(M_2)}$.
\end{prf}

Note that one may use, for example, \cite[Theorem 2]{Kad52} instead of Proposition \ref{prop:Dye}(b) to conclude the proof of part (b) above.
However, the current proof enables us to get an elementary proof of our next corollary  (note that the results in \cite{Dye} are only needed in Theorem \ref{thm:pre-tran-prop>jord-isom}(a)).

Through the argument of this corollary, one may also regard Theorem \ref{thm:pre-tran-prop>jord-isom}(b) as an extension of Wigner's theorem, and hence we get another proof for Wigner's theorem.
Notice that the first part of the proof of this corollary is similar to that of \cite[Theorem 1]{Shultz98}, but instead of showing the extension to be affine and employing \cite[Corollary 5]{Kad52}, we show that the extension preserves $P_0$ and use Theorem \ref{thm:pre-tran-prop>jord-isom}(b) to obtain the conclusion.

%\medskip

\begin{cor}\label{cor:Wigner} (Wigner)
Let $\KH_1$ and $\KH_2$ be two Hilbert spaces and let
$$\KS_\mathrm{p}(\CL(\KH_k)) := \{\omega_\xi: \xi\in S_{\KH_k}\} \qquad (k=1,2).$$
If $\Phi: \KS_\mathrm{p}(\CL(\KH_1))\to  \KS_\mathrm{p}(\CL(\KH_2))$ is a bijection that preserves the transition probability, there is a Jordan $^*$-isomorphism $\Theta:\CL(\KH_2)\to \CL(\KH_1)$ with
$\Phi = \Theta_*|_{\KS_\mathrm{p}(\CL(\KH_1))}$.
\end{cor}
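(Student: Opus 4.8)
The plan is to reduce Wigner's theorem to Theorem \ref{thm:pre-tran-prop>jord-isom}(b) by extending $\Phi$ to a bijection between the full normal state spaces $\KS(\CL(\KH_1))$ and $\KS(\CL(\KH_2))$ that preserves the asymmetric transition probability $P_0$. First I would recall the standard identification: for a Hilbert space $\KH$, every normal state on $\CL(\KH)$ is of the form $\rho\mapsto \mathrm{tr}(\rho\, \cdot\,)$ for a positive trace-class operator $\rho$ with $\mathrm{tr}(\rho)=1$, and $\KS_{\mathrm p}(\CL(\KH))$ corresponds exactly to the rank-one density operators, i.e.\ the extreme points of $\KS(\CL(\KH))$. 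On these pure normal states the asymmetric transition probability $P_0(\omega_\xi,\omega_\eta)=\omega_\xi(\ms_{\omega_\eta})$ reduces to $|\la\xi,\eta\ra|^2$, since $\ms_{\omega_\eta}$ is the rank-one projection onto $\BC\eta$; so on $\KS_{\mathrm p}$ preservation of the transition probability in Wigner's sense is the same as preservation of $P_0$.

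Next I would build the extension. Since $\Phi$ preserves $|\la\xi,\eta\ra|^2$, in particular it preserves orthogonality of unit vectors in both directions; hence it maps maximal orthonormal families to maximal orthonormal families, and a routine argument (choosing an orthonormal basis of $\KH_1$ and tracking where $\Phi$ sends it) shows $\dim\KH_1=\dim\KH_2$ and that $\Phi$ restricts to an orthoisomorphism on the rank-one projections. The key point is that every normal state $\mu$ of $\CL(\KH_1)$ has a spectral decomposition $\mu=\sum_k t_k\,\omega_{\xi_k}$ with $\{\xi_k\}$ orthonormal, $t_k>0$, $\sum_k t_k=1$; I define $\ti\Phi(\mu):=\sum_k t_k\,\Phi(\omega_{\xi_k})$, which again is a normal state of $\CL(\KH_2)$ because the vectors $\Phi(\omega_{\xi_k})$ are represented by a pairwise orthogonal family of unit vectors. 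One must check this is well defined (the spectral decomposition is unique up to the choice of orthonormal basis within each eigenspace, and on a fixed eigenspace $\Phi$ is known — via Uhlhorn/Wigner on that subspace, or directly — to send any orthonormal basis to an orthonormal basis spanning a fixed subspace, so the sum is independent of the choice), and that $\ti\Phi$ is a bijection with the analogous inverse construction from $\Phi^{-1}$.

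Finally I would verify that $\ti\Phi$ preserves $P_0$ on all of $\KS(\CL(\KH_1))$. Writing $\mu=\sum_k t_k\omega_{\xi_k}$ and noting that the support projection $\ms_\nu$ of a normal state $\nu$ is the orthogonal projection onto $\overline{\mathrm{span}}$ of the eigenvectors of its density operator, one computes $P_0(\mu,\nu)=\mu(\ms_\nu)=\sum_k t_k\,\omega_{\xi_k}(\ms_\nu)=\sum_k t_k\la \ms_\nu\xi_k,\xi_k\ra$; expanding $\ms_\nu$ in terms of the orthonormal eigenvectors $\{\eta_j\}$ of $\nu$ gives $P_0(\mu,\nu)=\sum_{k,j}t_k\,|\la\xi_k,\eta_j\ra|^2$. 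The corresponding quantity for $\ti\Phi(\mu),\ti\Phi(\nu)$ is $\sum_{k,j}t_k\,|\la\Phi(\omega_{\xi_k}),\Phi(\omega_{\eta_j})\ra|^2$ in the obvious vector representatives, and the fact that $\ms_{\ti\Phi(\nu)}$ is the projection onto the span of those representatives — which needs the observation that $\Phi$ sends orthonormal families to orthonormal families, so the representatives of $\{\Phi(\omega_{\eta_j})\}$ really are orthonormal and span $\ms_{\ti\Phi(\nu)}$ — lets me match the two sums term by term using the hypothesis $|\la\Phi(\omega_{\xi_k}),\Phi(\omega_{\eta_j})\ra|^2=|\la\xi_k,\eta_j\ra|^2$. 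Hence $\ti\Phi$ preserves $P_0$, and Theorem \ref{thm:pre-tran-prop>jord-isom}(b) yields a Jordan $^*$-isomorphism $\Theta:\CL(\KH_2)\to\CL(\KH_1)$ with $\ti\Phi=\Theta_*|_{\KS(\CL(\KH_1))}$; restricting to pure states gives $\Phi=\Theta_*|_{\KS_{\mathrm p}(\CL(\KH_1))}$. The main obstacle I expect is the well-definedness of $\ti\Phi$: showing that on a degenerate eigenspace of $\mu$ the value $\sum t_k\Phi(\omega_{\xi_k})$ does not depend on the chosen orthonormal basis. This is exactly where one needs that $\Phi$, restricted to the unit sphere of any closed subspace, maps orthonormal bases of that subspace to orthonormal bases of a single fixed subspace of $\KH_2$ — a consequence of biorthogonality preservation together with the bijectivity of the induced face map, i.e.\ essentially the content already extracted in Lemma \ref{lem:st-sp-to-proj}(a) applied after the extension, or provable directly here.
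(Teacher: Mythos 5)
Your proposal follows essentially the same route as the paper: extend $\Phi$ to the full normal state spaces by $\ti\Phi\bigl(\sum_k t_k\omega_{\xi_k}\bigr):=\sum_k t_k\Phi(\omega_{\xi_k})$ using a spectral decomposition, check that $\ti\Phi$ preserves $P_0$, and invoke Theorem \ref{thm:pre-tran-prop>jord-isom}(b). The one place you diverge is the well-definedness of $\ti\Phi$, which you rightly identify as the main obstacle: your primary suggestion, to get it from Lemma \ref{lem:st-sp-to-proj}(a) ``applied after the extension,'' is circular, since that lemma presupposes a bijection between the full normal state spaces, which is exactly what you are constructing. Your fallback (an eigenspace-by-eigenspace Uhlhorn-type argument showing $\Phi$ carries orthonormal bases of each finite-dimensional eigenspace to orthonormal bases of one fixed subspace) can be made to work, but the paper sidesteps all of this: it simply evaluates the candidate state $\sum_k t_k\Phi(\omega_{\xi_k})$ on finite-rank projections of the form $\sum_j\ms_{\Phi(\omega_{\zeta_j})}$, uses preservation of the transition probability to see that the result equals $\mu\bigl(\sum_j\ms_{\omega_{\zeta_j}}\bigr)$, and concludes from the surjectivity of $\Phi$ (so that every rank-one projection of $\CL(\KH_2)$ arises this way) that the value is independent of the chosen decomposition. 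That computation is worth adopting, as it also delivers injectivity of $\ti\Phi$ and the identity $\ti\Phi(\mu)(\ms_{\Phi(\omega_\eta)})=\mu(\ms_{\omega_\eta})$ used in the final $P_0$-preservation check, which otherwise matches yours term for term.
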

\begin{prf}
For any $k=1,2$ and $\xi\in S_{\KH_k}$, we know that $\ms_{\omega_\xi}$ is the projection $e_\xi$ from $\KH_k$ onto $\BC\cdot \xi$.
Through diagonalisation of positive trace-class operators, we know that for each $\mu\in \KS(\CL(\KH_1))$, there exist $n\in \BN\cup\{\infty\}$, an orthonormal sequence $\{\xi_i\}_{i=1}^n$ in $S_{\KH_1}$ and  a sequence $\{t_i\}_{i=1}^n$ in $(0,1]$ with $\sum_{i=1}^n t_i = 1$ such that
$\mu =\sum_{i=1}^n t_i \omega_{\xi_i}$ (converges in norm).
In this case, we propose to set
$$
\bar \Phi(\mu) := \sum_{i=1}^n t_i  \Phi(\omega_{\xi_i}).
$$
For any finite orthonormal sequence $\{\zeta_j\}_{j=1}^N$ in $S_{\KH_1}$, one has, by the hypothesis,
\begin{equation}\label{eqt:ext-Phi}
\sum_{i=1}^n t_i  \Phi(\omega_{\xi_i})\Big(\sum_{j=1}^N\ms_{\Phi(\omega_{\zeta_j})}\Big)
\ = \ \sum_{j=1}^N\sum_{i=1}^n t_i  \omega_{\xi_i}(\ms_{\omega_{\zeta_j}})
\ = \ \mu\Big(\sum_{j=1}^N\ms_{\omega_{\zeta_j}}\Big).
\end{equation}
Since $\Phi$ is surjective, the above tells us  that the value of $\sum_{i=1}^n t_i  \Phi(\omega_{\xi_i})$ on any finite rank projection in $\Proj(\CL(\KH_2))$ is independent of the decomposition $\mu =\sum_{i=1}^n t_i \omega_{\xi_i}$.
Thus, $\bar \Phi(\mu)$ is well-defined.

On the other hand, Relation \eqref{eqt:ext-Phi} also  tells us that $\bar \Phi: \KS(\CL(\KH_1)) \to \KS(\CL(\KH_2))$ is an injection, and the surjectivity of $\bar \Phi$ follows from
the surjectivity of $\Phi$.
Furthermore, \eqref{eqt:ext-Phi} implies that $\bar\Phi(\mu)(\ms_{\Phi(\omega_\eta)}) = \mu(\ms_{\omega_\eta})$ ($\eta\in S_{\KH_1}$).

Let $\nu\in \KS(\CL(\KH_1))$ and $\nu = \sum_{l=1}^m r_l \omega_{\eta_l}$ be a
decomposition of $\nu$ similar to that of $\mu$ in the above.
As $\{\ms_{\omega_{\eta_l}}\}_{l=1}^m$ (and hence $\{\ms_{\Phi(\omega_{\eta_l})}\}_{l=1}^m$) is an orthogonal sequence and $r_l> 0$ (for all $l=1,2,...,m$), we have
\begin{equation*}\label{eqt:decomp-s-Phi-mu}
\ms_{\nu} = \sum_{l=1}^m \ms_{\omega_{\eta_l}} \quad \text{as well as}\quad \ms_{\bar\Phi(\nu)} = \sum_{l=1}^m \ms_{\Phi(\omega_{\eta_l})}
\end{equation*}
(the convergences are in the weak$^*$-topology).
Hence,
$$\bar \Phi(\mu)(\ms_{\bar\Phi(\nu)})
\ =\ \bar \Phi (\mu) \Big( \sum_{l=1}^m \ms_{\Phi(\omega_{\eta_l})}\Big)
\ =\ \sum_{l=1}^m \mu(\ms_{\omega_{\eta_l}})
\ =\ \mu(\ms_{\nu}).$$
Finally, Theorem \ref{thm:pre-tran-prop>jord-isom}(b) gives a Jordan $^*$-isomorphism  $\Theta:\CL(\KH_1)\to \CL(\KH_2)$ satisfying $\Phi=\Theta_*\mid_{\KS(\CL(\KH_1))}$.
\end{prf}

%\medskip

On the other hand, Theorem \ref{thm:pre-tran-prop>jord-isom}(a) can  be regarded as an extension of a weak form of Uhlhorn's theorem for the normal state space of von Neumann algebras.
In particular, we have the following application of it.

%\medskip

\begin{thm}\label{thm:st-sp-met-inv}
Let $M_1$ and $M_2$ be von Neumann algebras.
Then $M_1$ and $M_2$ are Jordan $^*$-isomorphic if there is a bijection $\Phi:\KS(M_1) \to \KS(M_2)$ satisfying any
one of the following conditions:
\begin{enumerate}[(1)]
\item $\Phi$ preserves the usual metric $\dist_1$;
\item $\Phi$ preserves pairs with zero Raggio transition probabilities, i.e.\
for any $\mu,\nu\in \KS(M_1)$,
$$P_R\big(\Phi(\mu), \Phi(\nu)\big)=0 \quad \text{if and only if} \quad P_R(\mu, \nu)=0;$$
\item $\Phi$ preserves pairs with zero Uhlmann transition probabilities;
\item $\Phi$ preserves pairs with zero ``asymmetric transition probabilities''.
\end{enumerate}
\end{thm}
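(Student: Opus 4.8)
The plan is to reduce all four cases to Theorem~\ref{thm:pre-tran-prop>jord-isom}(a): as soon as one knows that the bijection $\Phi$ is biorthogonality preserving, that theorem produces a Jordan $^*$-isomorphism between $M_1$ and $M_2$. So the entire task is to verify, for each of the hypotheses (1)--(4), that $\Phi$ (hence also $\Phi^{-1}$) carries the relation ``$\ms_\mu\ms_\nu=0$'' onto itself.

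Cases (2), (3) and (4) should be essentially immediate. For normal states the vanishing of any one of $P_0$, $P_R$, $P_U$ is equivalent to orthogonality of the support projections: for $P_0$ this is exactly \eqref{eqt:orthog-support}; for $P_U$ and $P_R$ one uses the sandwich estimate \eqref{eqt:cf-Rag-Uhlm} (which makes $P_U$ and $P_R$ vanish simultaneously) together with the computation $\la\xi_\mu,\xi_\nu\ra=\la\ms_\mu\xi_\mu,\ms_\nu\xi_\nu\ra=\la\xi_\mu,\ms_\mu\ms_\nu\xi_\nu\ra$ in the standard form for one direction, and the identification of disjointness in the self-dual cone for the other (this is the equivalence recorded in \eqref{eqt:orth<->van-tran-prob}). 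Granting that, each of the three conditions ``$\Phi$ preserves pairs with zero $P$-transition probability'' is, word for word, the statement that $\Phi$ is biorthogonality preserving, so Theorem~\ref{thm:pre-tran-prop>jord-isom}(a) applies at once.

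Case (1) requires one extra ingredient, the classical fact that for $\mu,\nu\in\KS(M)$,
$$\dist_1(\mu,\nu)=\|\mu-\nu\|=2\qquad\Longleftrightarrow\qquad \ms_\mu\ms_\nu=0.$$
I would prove this directly. For ``$\Leftarrow$'': if $\ms_\mu\ms_\nu=0$ then $\nu(\ms_\mu)=0$ by \eqref{eqt:orthog-support}, so evaluating $\mu-\nu$ at the self-adjoint contraction $2\ms_\mu-1$ gives $(\mu-\nu)(2\ms_\mu-1)=2$, hence $\|\mu-\nu\|=2$. For ``$\Rightarrow$'': take the Jordan decomposition $\mu-\nu=(\mu-\nu)_+-(\mu-\nu)_-$ in $M_*$; since $(\mu-\nu)(1)=0$ and $\|\mu-\nu\|=2$, one gets $\|(\mu-\nu)_+\|=\|(\mu-\nu)_-\|=1$, and if $e\in\Proj(M)$ is the projection carrying the positive part then $\mu(e)-\nu(e)=(\mu-\nu)(e)=\|(\mu-\nu)_+\|=1$, forcing $\mu(e)=1$ and $\nu(e)=0$, i.e.\ $\ms_\mu\le e\le 1-\ms_\nu$, so that $\ms_\mu\ms_\nu=0$. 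With this established, a bijection preserving $\dist_1$ preserves the relation ``$\dist_1=2$'' in both directions, hence is biorthogonality preserving, and Theorem~\ref{thm:pre-tran-prop>jord-isom}(a) finishes the proof.

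I do not anticipate a genuine obstacle: the statement is in effect a corollary of Theorem~\ref{thm:pre-tran-prop>jord-isom}(a). The only steps carrying real content are the norm-two characterisation in case (1) (classical, but worth recording) and, for $P_R$ (equivalently $P_U$), the passage from ``zero transition probability'' to ``orthogonal supports'', where one genuinely has to argue inside the standard form rather than formally; everything else is bookkeeping.
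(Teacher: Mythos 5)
Your proposal follows the same route as the paper: verify in each of the four cases that $\Phi$ is biorthogonality preserving and then invoke Theorem \ref{thm:pre-tran-prop>jord-isom}(a); your treatments of cases (1), (3) and (4) match the paper's, with the bonus that you actually prove the norm-two criterion $\|\mu-\nu\|=2\Leftrightarrow\ms_\mu\ms_\nu=0$ that the paper only cites as well known. The one step that is not yet closed is the hard direction of case (2): you need $\la\xi_\mu,\xi_\nu\ra=0\Rightarrow\ms_\mu\ms_\nu=0$, and justifying it by appeal to \eqref{eqt:orth<->van-tran-prob} is circular, since in the paper that display is recorded \emph{after} this theorem as a consequence of its proof. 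The underlying standard-form fact is true, but it needs an argument; the paper supplies one via Haagerup's estimate $\|\xi-\eta\|^2\le\|\omega_\xi-\omega_\eta\|\le\|\xi-\eta\|\,\|\xi+\eta\|$ (\cite[Lemma 2.10(2)]{Haag-st-form}), which together with $\|\xi_\mu-\xi_\nu\|\,\|\xi_\mu+\xi_\nu\|=\sqrt{4-4\la\xi_\mu,\xi_\nu\ra^2}$ shows that $\|\xi_\mu-\xi_\nu\|^2=2$ if and only if $\|\mu-\nu\|=2$, thereby reducing case (2) to your case (1). Substituting that short computation for your citation removes the only gap.
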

\begin{prf}
We claim that in each of the four cases, $\Phi$ is biorthogonality preserving, and thus Theorem \ref{thm:pre-tran-prop>jord-isom}(a) applies.

Indeed, the assertion for the case of
$\Phi$ preserving the usual metric $\dist_1$ follows from the  well-known fact that $\ms_\mu\ms_\nu = 0$ if and only if $\|\mu - \nu\| = 2$.

Suppose that $\Phi$ preserves pairs with zero Raggio transition probabilities.
By \eqref{eqt:Rag-tran-prob<->d2} and \eqref{eqt:d2<->Hil}, we know that $P_R(\mu, \nu) = 0$ if and only if $\|\xi_\mu - \xi_\nu\|^2 = 2$.
On the other hand, it follows from \cite[Lemma 2.10(2)]{Haag-st-form} that $\|\xi_\mu - \xi_\nu\|^2 = 2$ if and only if $\|\mu - \nu\| = 2$, because $\|\xi_\mu - \xi_\nu\|\|\xi_\mu + \xi_\nu\| = \sqrt{4 - 4\la \xi_\mu, \xi_\nu\ra^2}$.
Thus, the assertion for the second case follows from that of the first case.

Finally, the assertions for the third (respectively, the fourth) case, follows from \eqref{eqt:cf-Rag-Uhlm} (respectively, \eqref{eqt:orthog-support}) and the second case.
\end{prf}

%\medskip

As seen in the above, for any $\mu,\nu\in \KS(M_1)$, one has
\begin{equation}\label{eqt:orth<->van-tran-prob}
\ms_\mu\ms_\nu = 0 \quad \Longleftrightarrow\quad P_U(\mu, \nu) = 0 \quad  \Longleftrightarrow\quad  P_R(\mu, \nu) = 0 \quad  \Longleftrightarrow\quad  P_0(\mu, \nu) = 0.
\end{equation}
In particular, we obtained an alternative proof of \cite[Lemma 1.8]{AP00}.

%\medskip

One may wonder if it is possible to get a stronger conclusion for Theorem \ref{thm:pre-tran-prop>jord-isom}(a) (and hence a stronger conclusion for Theorem \ref{thm:st-sp-met-inv}) similar to that of Theorem \ref{thm:pre-tran-prop>jord-isom}(b).
However, the following example shows that it is impossible even in the case when  $M_1 = M_2 = \CL(\ell^2)$.

%\medskip

\begin{eg}\label{eg:Uhlh-gen-fails}
	Let $M= \CL(\KH)$ with $\operatorname{dim} \KH \geq 2$.   Let $\sim$ be an equivalence relation in $\KS(M)$ defined by
	$$
	\mu \sim \nu \quad\text{if and only if}\quad \ms_\mu=\ms_\nu.
	$$
	Let $\CC$ be the set of equivalence classes of $\KS(M)$ under $\sim$.
	Suppose that $\zeta_1$ and $\zeta_2$ are two orthogonal elements in $S_{\KH}$, and $e_{\zeta_k}\in \Proj(M)$ is the orthogonal projection onto $\BC\cdot \zeta_k$ ($k=1,2$).
	For any $t\in (0,1)$, if we set $\mu_t := t \omega_{\zeta_1} + (1-t) \omega_{\zeta_2}$, then $\ms_{\mu_t} = e_{\zeta_1} + e_{\zeta_2}$.
	Hence, $\{\mu_t: t\in (0,1)\}\subseteq C_0$ for an element $C_0\in \CC$.
	Consider any bijection $\Phi_0:C_0\to C_0$ with
	$$\Phi_0(\mu_t) = (1-t) \omega_{\zeta_1} + t \omega_{\zeta_2} \qquad (t\in (0,1)),$$
	and define a bijection $\Phi:\KS(M)\to \KS(M)$ by setting $\Phi|_{C_0} = \Phi_0$ as well as
	$$
	\Phi(\mu) = \mu \qquad (\mu\in \KS(M)\setminus C_0).
	$$
	From the definition of $\Phi$, we know that $\ms_{\Phi(\mu)} = \ms_\mu$ ($\mu\in \KS(M)$), and %hence for any $\mu,\nu\in \KS(M)$, one has $\Phi(\mu)\sim \Phi(\nu)$ if and only if $\mu\sim \nu$.
	$\Phi$ is biorthogonality preserving.
	However, since
	$$\|\omega_{\zeta_1} - \mu_t\| = 2-2t \quad \text{and} \quad \|\Phi(\omega_{\zeta_1}) - \Phi(\mu_t)\| = 2t \qquad (t\in (0,1)),$$
	one concludes that $\Phi$ cannot be induced by any continuous map from $M_*$ to itself.
\end{eg}

%\medskip

Nevertheless, in the case when the bijection $\Phi$ actually preserves the Raggio transition probability, we will see
in Theorem \ref{thm:Raggio-tran-prop<->jord-isom} below that the conclusion as in Theorem \ref{thm:pre-tran-prop>jord-isom}(b) holds.
In order to obtain this result, we need some more preparations.

%\medskip

A normed space $X$ is said to be \emph{strictly convex} if for any $x,y\in X$, the condition $\|x+y\| = \|x\|+\|y\|$ implies that $x$ and $y$ are linearly dependent.
Clearly, any Hilbert space is strictly convex.
Let us recall the following well-known fact in Banach spaces theory.

%\medskip

\begin{lem}\label{lem:strict-conv-pos-homo}
	Suppose that $X_1$ and $X_2$ are real Banach spaces such that $X_2$ is strictly convex.
	If $K$ is a convex subset of $X_1$ and $f:K\to X_2$ is a metric preserving map, then $f$ is automatically an affine map.
\end{lem}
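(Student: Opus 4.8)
The plan is to reduce the statement to the assertion that $f$ preserves midpoints, i.e.\ $f\bigl(\tfrac{x+y}{2}\bigr) = \tfrac{f(x)+f(y)}{2}$ for all $x,y\in K$ (note that $\tfrac{x+y}{2}\in K$ by the convexity of $K$), and then to upgrade this to affineness by invoking the continuity of $f$.

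The first step is to record the following elementary consequence of strict convexity: \emph{if $a,b,c\in X_2$ satisfy $\|a-c\| = \|c-b\| = \tfrac{1}{2}\|a-b\|$, then $c = \tfrac{a+b}{2}$.} To see this, set $u := a-c$ and $v := c-b$, so that $u+v = a-b$ and $\|u\|+\|v\| = \|u+v\|$. If $a = b$ then $\|u\| = \|v\| = 0$ and $c = a = b = \tfrac{a+b}{2}$; otherwise $u,v\neq 0$, and strict convexity of $X_2$ forces $u$ and $v$ to be linearly dependent, say $v = \lambda u$ with $\lambda\in\BR$. Then $|\lambda|\,\|u\| = \|v\| = \|u\|$ gives $\lambda=\pm1$, and $\lambda=-1$ is excluded since it would yield $u+v=0$, i.e.\ $a=b$. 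Hence $\lambda = 1$, that is $a-c = c-b$, so $c = \tfrac{a+b}{2}$.

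Next, fix $x,y\in K$ and put $m := \tfrac{x+y}{2}\in K$. Since $f$ preserves distances, $\|f(x)-f(m)\| = \|x-m\| = \tfrac12\|x-y\|$, likewise $\|f(m)-f(y)\| = \tfrac12\|x-y\|$, and $\|f(x)-f(y)\| = \|x-y\|$. Applying the observation of the previous paragraph with $a = f(x)$, $b = f(y)$, $c = f(m)$ gives $f(m) = \tfrac{f(x)+f(y)}{2}$; thus $f$ preserves midpoints.

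Finally, iterating the midpoint identity shows that $f\bigl((1-t)x+ty\bigr) = (1-t)f(x)+tf(y)$ for every dyadic rational $t\in[0,1]$ and all $x,y\in K$ (every point that occurs lies in $K$ by convexity). Since $f$, being distance preserving, is continuous, and $t\mapsto (1-t)x+ty$ is continuous, this identity passes to all $t\in[0,1]$, so $f$ is affine on $K$. No genuine obstacle arises here; the only points deserving a little care are the correct use of strict convexity in the equality case $\|u\|+\|v\| = \|u+v\|$ together with $\|u\|=\|v\|$, and the verification that all the midpoints produced in the iteration stay inside the convex set $K$.
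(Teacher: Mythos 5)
Your proof is correct. The engine is the same as in the paper -- strict convexity applied to the equality case of the triangle inequality along the segment from $f(x)$ to $f(y)$ -- but you organize the argument differently. The paper translates so that $y=0$ and $f(y)=0$, and then for an \emph{arbitrary} $t\in(0,1)$ deduces $\|f(x)-f(tx)\|=\|f(x)\|-\|f(tx)\|$ and concludes $f(tx)=tf(x)$ in one step; no density or continuity argument is needed. You instead isolate a clean, symmetric midpoint lemma (if $\|a-c\|=\|c-b\|=\tfrac12\|a-b\|$ in a strictly convex space then $c=\tfrac{a+b}{2}$), show $f$ preserves midpoints, and then upgrade to all $t\in[0,1]$ via dyadic rationals and the continuity of $f$ (which is automatic for an isometry). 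Your route is slightly longer but avoids the translation trick and the small bookkeeping about which of $f(x)-f(tx)$ and $f(tx)$ is a nonnegative multiple of the other; the paper's route is shorter but less modular. Your handling of the degenerate cases ($a=b$, and the exclusion of $\lambda=-1$) and the observation that all intermediate points remain in $K$ by convexity are exactly the points that need care, and you treat them correctly.
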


%\medskip

For the benefit of the readers, we sketch a proof here.
In fact, in order to show
$$f\big(t x + (1-t)y\big) = tf(x) + (1-t)f(y)$$
for any $x\neq y$ in $K$ and $t\in (0,1)$, we may assume
(by ``shifting'' $K$ and $f$ if necessary) that $y=0$ and that $f(0) = 0$.
In this case, we have
\begin{equation}\label{eqt:affine}
\|f(x) - f(tx)\| = \|x -tx\| = (1-t)\|f(x)\| = \|f(x)\| - t\|x\| = \|f(x)\| - \|f(tx)\|,
\end{equation}
and the strict convexity gives $f(x) - f(tx)\in \BR\cdot f(tx)$.
This, together with \eqref{eqt:affine},  establishes  the required relation:
$f(tx) = tf(x)$.

%\medskip

\begin{prop}\label{prop:isom-cone>Jord}
Let $(M_k, \KH_k, J_k, \KP_k)$ be a von Neumann algebra in its standard form ($k=1,2$).
There are canonical bijective correspondences (through restrictions) amongst the following:
\begin{itemize}
		\item the set $\CI_\KH$ of complex linear isometries from $\KH_1$ to $\KH_2$ sending $\KP_1$ onto $\KP_2$;
		\item the set $\CI_\KP$ of metric preserving surjections from $\KP_1$ onto $\KP_2$;
		\item the set $\CI_S$ of metric preserving surjections from $S_{\KP_1}$ onto $S_{\KP_2}$.
\end{itemize}
\end{prop}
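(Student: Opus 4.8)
The plan is to establish the three correspondences by going around the cycle $\CI_\KH \to \CI_\KP \to \CI_S \to \CI_\KH$, showing that each map (restriction, or in the last case extension) is well-defined and that the composite around the cycle is the identity; since each arrow is visibly injective this forces all three to be bijections. The two restriction maps $\CI_\KH \to \CI_\KP$ and $\CI_\KP \to \CI_S$ are immediate: a complex linear isometry $U:\KH_1\to\KH_2$ with $U(\KP_1)=\KP_2$ restricts to a metric preserving surjection $\KP_1\to\KP_2$ (isometries preserve distance, and $U$ preserves norms so it maps $S_{\KP_1}$ onto $S_{\KP_2}$), and restricting such a surjection from $\KP_1$ to the unit sphere $S_{\KP_1}$ again obviously lands in $\CI_S$. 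Injectivity of both restrictions will follow once we close the cycle, because a map in $\CI_S$ already determines the one in $\CI_\KP$ it came from (by the cone structure, see below) and hence the one in $\CI_\KH$.

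The substantive direction is $\CI_S \to \CI_\KH$: given a metric preserving surjection $\phi: S_{\KP_1}\to S_{\KP_2}$ I must produce a complex linear isometry $U:\KH_1\to\KH_2$ with $U(\KP_1)=\KP_2$ whose restriction to $S_{\KP_1}$ is $\phi$. First I extend $\phi$ to a map $\tilde\phi:\KP_1\to\KP_2$ by positive homogeneity, $\tilde\phi(0):=0$ and $\tilde\phi(\xi):=\|\xi\|\,\phi(\xi/\|\xi\|)$ for $\xi\neq 0$; one checks this is a metric preserving surjection of the cones (distance preservation on the sphere propagates to the cone via the identity $\|s\alpha - t\beta\|^2 = s^2+t^2-2st\la\alpha,\beta\ra$ for unit vectors $\alpha,\beta$, together with the fact that $\phi$ being an isometry of spheres forces $\la\phi(\alpha),\phi(\beta)\ra = \la\alpha,\beta\ra$). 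Next, since $\KP_k$ is a self-dual cone, the real linear span $\KP_k - \KP_k$ is dense in the real Hilbert space underlying $\KH_k$ — indeed $\KP_k$ is generating, this being part of the standard-form structure. I then invoke Lemma \ref{lem:strict-conv-pos-homo}: $\tilde\phi$ is a metric preserving map from the convex set $\KP_1$ into the (strictly convex) real Hilbert space $\KH_2$, hence affine; being also positively homogeneous with $\tilde\phi(0)=0$, it is additive and $\BR_+$-homogeneous on $\KP_1$, so it extends to a real linear isometry $U_\BR:\KP_1-\KP_1 \to \KP_2-\KP_2$, and by density and continuity to a real linear surjective isometry $U_\BR:\KH_1\to\KH_2$ carrying $\KP_1$ onto $\KP_2$.

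The remaining point is to upgrade $U_\BR$ to a \emph{complex} linear isometry. Here I use the modular structure of the standard form: the complex structure on $\KH_k$ is recovered from the pair $(\KP_k, \text{real inner product})$ — concretely, for $\xi\in\KP_k$, the vector $\mathrm{i}\xi$ can be characterized in terms of the self-dual cone and the canonical involution $J_k$ (recall $J_k$ fixes $\KP_k$ pointwise and $\KP_k + \mathrm{i}\KP_k$ spans, with $\mathrm{i}$ anticommuting with $J_k$ on the relevant subspaces). Since $U_\BR$ preserves the real inner product and maps $\KP_1$ onto $\KP_2$, it intertwines the data that pins down multiplication by $\mathrm{i}$, hence $U_\BR(\mathrm{i}\eta)=\mathrm{i}\,U_\BR(\eta)$; so $U:=U_\BR$ is complex linear. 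Finally $U\in\CI_\KH$ restricts back to $\tilde\phi$ and then to $\phi$, closing the cycle, and injectivity of each arrow gives the three bijections.

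I expect the main obstacle to be exactly this last step — extracting complex linearity from a real-linear cone-preserving isometry. The cleanest route is probably to note that $U_\BR$, being a real-orthogonal surjection carrying one self-dual cone onto another and fixing the respective $J$'s (this must be checked: $J_k$ is the unique isometric conjugation fixing $\KP_k$, so $U_\BR J_1 = J_2 U_\BR$ automatically), induces an isomorphism of the two standard forms at the level of the "orientation" data, and the complex structure is part of that data; alternatively one argues directly that the real-linear span of $\KP_k$ and $\mathrm i \KP_k$, together with how $J_k$ acts, determines $\mathrm i$. Either way, the von Neumann–algebraic input (self-duality, properties (1)–(4) of the standard form, uniqueness of $J$) is what makes this go through, and care is needed to avoid circularity with the algebra structure, which plays no role here.
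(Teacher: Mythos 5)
Your overall skeleton (a cycle of restriction/extension maps with the substantive work in $\CI_S\to\CI_\KH$, and the use of Lemma \ref{lem:strict-conv-pos-homo} to linearize on the cone) matches the paper's strategy, and the real-linear part of your construction is sound. The gap is in the passage from the cone to all of $\KH_1$. Your assertion that ``the real linear span $\KP_k-\KP_k$ is dense in the real Hilbert space underlying $\KH_k$'' is false: $\KP_k-\KP_k$ is exactly the closed real subspace $\KH_k^\sa=\{\xi\in\KH_k: J_k\xi=\xi\}$, and $\KH_k=\KH_k^\sa\oplus \rmi\,\KH_k^\sa$ as real Hilbert spaces (already for $M=\BC$ one has $\KP-\KP=\BR\subsetneq\BC$). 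Consequently your $U_\BR$ is only ever defined on the proper closed real subspace $\KH_1^\sa$, not on $\KH_1$, and your entire final paragraph --- which treats $U_\BR$ as a real-linear isometry of $\KH_1$ onto $\KH_2$ and tries to extract complex linearity from modular/orientation data --- does not get off the ground. You yourself flag this step as the main obstacle and leave it unproved.

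The resolution you are missing is much simpler than what you sketch: since $\KH_k$ is the complexification of the real Hilbert space $\KH_k^\sa$, with $\KH_k^\sa$ real-orthogonal to $\rmi\,\KH_k^\sa$ (because $\la\xi,\eta\ra\in\BR$ for $\xi,\eta\in\KP_k-\KP_k$), one simply sets $\bar\varphi(\xi+\rmi\eta):=U_\BR(\xi)+\rmi\,U_\BR(\eta)$; this is complex linear by construction and is an isometry of $\KH_1$ onto $\KH_2$ carrying $\KP_1$ onto $\KP_2$. This is exactly what the paper does, after producing the real-linear isometry on $\KH_1^\sa$ via the orthogonal decomposition $\xi=\xi^+-\xi^-$ coming from self-duality together with the Mazur--Ulam theorem (your alternative route via affinity plus positive homogeneity and additivity would also serve for that part). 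No appeal to the modular structure, to uniqueness of $J_k$, or to any ``orientation'' data is needed, and as written your argument for complex linearity is not a proof.
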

\begin{prf}
For every $\rho\in \CI_\KH$, one clearly has $\rho|_{S_{\KP_1}}\in \CI_S$.
The assignment $\rho\mapsto \rho|_{S_{\KP_1}}$ defines an injection $R:\CI_\KH \to \CI_S$,  because $S_{\KP_1}$ generates $\KH_1$.
Secondly, if $\chi\in \CI_S$, then
$$\la \chi(\xi), \chi(\eta)\ra = \la \xi, \eta \ra\in \RP \qquad (\xi,\eta\in S_{\KP_1}),$$
and hence the extension $\ti \chi:t\xi \mapsto t\chi(\xi)$ ($\xi\in S_{\KP_1}, t\in \RP$) belongs to $\CI_\KP$.
This gives an injection $E:\CI_S \to \CI_\KP$.
Furthermore, as elements in $\CI_\KH$ are affine, the composition $E\circ R: \CI_\KH\to \CI_\KP$ coincides with the restriction map $\rho\mapsto \rho|_{\KP_1}$.
Thus, it remains to show that $E\circ R$ is surjective.
	
	Let us now consider $\varphi\in \CI_\KP$.
	Since the only extreme point in $\KP_k$ is the zero element, we know from Lemma \ref{lem:strict-conv-pos-homo} that $\varphi(0) = 0$.
The metric preserving assumption now implies
	\begin{equation}\label{eqt:pre-met}
	\|\varphi(\xi)\| = \|\xi\| \quad \text{and}\quad \la \varphi(\xi),\varphi(\eta)\ra = \la \xi, \eta\ra\in \RP \qquad (\xi,\eta\in \KP_1).
	\end{equation}
For $k=1,2$, we denote by $\KH_k^\sa$ the real Hilbert space $\KP_k - \KP_k$.
As $\KP_k$ is a self-dual cone, if $\eta \in \KH_k^\sa$, there exist unique elements $\xi^+, \xi^-\in \KP_k$ with $\xi = \xi^+ - \xi^-$ and $\|\xi\|^2 =  \|\xi^+\|^2 +  \|\xi^-\|^2$ (see e.g.\ \cite[Lemme I.1.2]{Ioch-sd-cone}).

Define $\ti\varphi: \KH_1^\sa\to \KH_2^\sa$ by
	$$\ti \varphi(\xi) := \varphi(\xi^+) - \varphi(\xi^-) \qquad (\xi\in \KH_1^\sa).$$
	For every $\xi,\eta\in \KH_1^\sa$, one knows from \eqref{eqt:pre-met} that
	\begin{equation*}
		\|\ti \varphi(\xi) - \ti \varphi(\eta)\|^2
		\ = \ \|\varphi(\xi^+)  - \varphi(\xi^-) - \varphi(\eta^+)+\varphi(\eta^-)\|^2
		\ = \ \|\xi^+ -\xi^- - \eta^+ + \eta^- \|^2,
	\end{equation*}
which means that $\ti \varphi$ preserves metric.
Hence,
$$\|\ti \varphi(\xi)\|^2 = \|\xi\|^2 = \|\xi^+\|^2 + \|\xi^-\|^2 = \|\varphi(\xi^+)\|^2 + \|\varphi(\xi^-)\|^2,$$ and the uniqueness of $\ti\varphi(\xi)^\pm$ produces
	\begin{equation}\label{eqt:varphi-pre-pm}
	\ti\varphi(\xi)^\pm = \varphi(\xi^\pm) \qquad (\xi\in \KH_1^\sa).
	\end{equation}

	If $\psi:= \varphi^{-1}:\KP_2\to \KP_1$ and $\ti \psi$ is defined in the same way as $\ti \varphi$, then, by a similar property as \eqref{eqt:varphi-pre-pm} for $\ti \psi$, we obtain that, for each $\zeta\in \KH_1^\sa$,
	\begin{equation*}
		\ti \varphi(\ti \psi(\zeta))
		\ = \ \ti \varphi(\ti\psi(\zeta)^+ - \ti\psi(\zeta)^-)
		\ = \ \varphi(\ti\psi(\zeta)^+) - \varphi(\ti\psi(\zeta)^-)
		\ = \ \varphi(\psi(\zeta^+)) - \varphi(\psi(\zeta^-))
		\ = \ \zeta.
	\end{equation*}
Consequently, $\ti \varphi$ is surjective.
It now follows from the Mazur-Ulam theorem that $\ti \varphi$ is a linear isometry from $\KH^{sa}_1$ onto $\KH^{sa}_2$.
Finally, the complexification, $\bar \varphi$, of $\ti \varphi$ is an element in  $\CI_\KH$ (note that linear isometries preserve inner products) satisfying $\bar\varphi|_{\KP_1} = \varphi$.
\end{prf}

%\medskip

Recall that a projection $p\in \Proj(M)\setminus \{0\}$ is said to be \emph{$\sigma$-finite} if any family of non-zero orthogonal subprojections of $p$ is countable.
It is easy to check that  $\Proj_\sigma(M)$ consists exactly of  $\sigma$-finite projections and the sum of a finite number of orthogonal $\sigma$-finite projections is again $\sigma$-finite.
We also recall that a von Neumann algebra is said to be \emph{$\sigma$-finite} if its identity is a $\sigma$-finite projection.

%\medskip

\begin{thm}\label{thm:Raggio-tran-prop<->jord-isom}
	Let $M_1$ and $M_2$ be two von Neumann algebras, and let $\Phi:\KS(M_1)\to \KS(M_2)$ be a bijection.
	Then $\Phi$ preserves the Raggio transition probability if and only if one can find a (necessarily unique) Jordan $^*$-isomorphism $\Theta:M_2\to M_1$ satisfying
	$\Phi = \Theta_*|_{\KS(M_1)}$.
\end{thm}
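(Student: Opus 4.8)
The direction $(\Leftarrow)$ follows from Proposition~\ref{prop:J-preserves-U-R-TP}, so the work is in $(\Rightarrow)$. The plan is to encode the hypothesis as a rigidity property of the self-dual cones, feed it into Proposition~\ref{prop:isom-cone>Jord} to obtain a cone-preserving unitary between the standard Hilbert spaces, and then read the Jordan $^*$-isomorphism off that unitary. (Note that Theorem~\ref{thm:pre-tran-prop>jord-isom}(a) already gives that $M_1$ and $M_2$ are Jordan $^*$-isomorphic, but to pin down $\Phi$ itself one needs the finer information carried by the cone.)

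Assume $\Phi$ preserves $P_R$ and fix standard forms $(M_k,\KH_k,\KP_k,J_k)$. First I would observe that $\mu\mapsto\xi_\mu$ is a bijection of $\KS(M_k)$ onto $S_{\KP_k}$ (since $\xi_{\omega_\xi}=\xi$ for $\xi\in S_{\KP_k}$), so that $\Phi$ transports to a bijection $\chi\colon S_{\KP_1}\to S_{\KP_2}$, $\chi(\xi_\mu):=\xi_{\Phi(\mu)}$. By \eqref{eqt:Rag-tran-prob<->d2} and \eqref{eqt:d2<->Hil}, and because $P_R$ is $[0,1]$-valued (self-duality of the cones together with Cauchy--Schwarz) while $t\mapsto\sqrt{2-2\sqrt t}$ is injective on $[0,1]$, the condition that $\Phi$ preserves $P_R$ is equivalent to $\chi$ being metric preserving, i.e.\ $\chi\in\CI_S$. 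Proposition~\ref{prop:isom-cone>Jord} then produces a unitary $\rho\colon\KH_1\to\KH_2$ with $\rho(\KP_1)=\KP_2$ restricting to $\chi$ on $S_{\KP_1}$; in particular $\Phi(\mu)=\omega_{\rho\xi_\mu}$ for every $\mu$. Since $\rho$ is complex linear and maps the self-adjoint part $\KP_1-\KP_1$ of $\KH_1$ onto that of $\KH_2$, it intertwines the conjugations, $\rho J_1=J_2\rho$, and consequently $N:=\rho M_1\rho^*$ together with $\KH_2,\KP_2,J_2$ again satisfies the four standard-form axioms.

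The crucial step is to extract $\Theta$ from $\rho$, and I expect this to be the main obstacle. After reducing to the $\sigma$-finite case --- one may split $M_1$, and correspondingly $M_2$ through the central orthoisomorphism furnished by Lemma~\ref{lem:st-sp-to-proj} and Proposition~\ref{prop:Dye}, into a direct sum over $\sigma$-finite central projections --- the plan is to prove the structural fact that two von Neumann algebras for which $(\KH_2,\KP_2,J_2)$ is simultaneously a standard form can differ only by passing to the commutant across a central summand: there is $z\in\Cent(M_2)=\Cent(N)$ with $Nz=M_2z$ and $N(1-z)=M_2'(1-z)=J_2M_2(1-z)J_2$. One then defines $\Theta\colon M_2\to M_1$ by
$$
\Theta(y)\ :=\ \rho^*(yz)\rho\ +\ \rho^*\big(J_2(y(1-z))^*J_2\big)\rho\qquad(y\in M_2),
$$
which is the $^*$-isomorphism $w\mapsto\rho^*w\rho$ on the summand $z$ and is $w\mapsto\rho^*w\rho$ precomposed with the $^*$-anti-isomorphism $w\mapsto J_2w^*J_2$ of $M_2(1-z)$ onto $M_2'(1-z)=N(1-z)$ on the summand $1-z$; hence $\Theta$ is a Jordan $^*$-isomorphism onto $M_1$.

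It remains to verify $\Phi=\Theta_*|_{\KS(M_1)}$, which is a short computation: for $\mu\in\KS(M_1)$ and $y\in M_2$ we have $\rho\xi_\mu\in\KP_2$, so $J_2(\rho\xi_\mu)=\rho\xi_\mu$ and therefore $\la J_2w^*J_2\,\rho\xi_\mu,\rho\xi_\mu\ra=\la w\,\rho\xi_\mu,\rho\xi_\mu\ra$ for all $w\in\CL(\KH_2)$; applying this to the two summands of $\Theta(y)$ gives $\Theta_*(\mu)(y)=\mu(\Theta(y))=\la y\,\rho\xi_\mu,\rho\xi_\mu\ra=\omega_{\rho\xi_\mu}(y)=\Phi(\mu)(y)$. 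Uniqueness of $\Theta$ is automatic, since $\Phi=\Theta_*|_{\KS(M_1)}$ determines $\Theta_*$ on the linear span of $\KS(M_1)$, which is $(M_1)_*$, and $(M_1)_*$ separates the points of $M_1$. The hard part, as noted, is the standard-form rigidity statement in the third paragraph; in the $\sigma$-finite case I would attack it by fixing a cyclic and separating vector in $\KP_2$ and comparing $N$ with $M_2$ through modular theory, in the spirit of \cite{Haag-st-form}.
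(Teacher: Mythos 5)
Your overall strategy --- transporting $\Phi$ to a metric-preserving bijection of $S_{\KP_1}$ onto $S_{\KP_2}$, extending it via Proposition \ref{prop:isom-cone>Jord} to a cone-preserving unitary $\rho:\KH_1\to\KH_2$, and then extracting the Jordan $^*$-isomorphism from $\rho$ --- is exactly the strategy of the paper, and everything up to and including the construction of $\rho$ is correct. The gap sits in the step you yourself flag as the main obstacle. Your ``standard-form rigidity'' statement (two von Neumann algebras sharing the standard-form data $(\KH_2,\KP_2,J_2)$ agree up to passing to the commutant across a central projection) is, after conjugating by $\rho$, precisely the content of \cite[Th\'{e}or\`{e}me 3.3]{Connes-Cara}, so in the $\sigma$-finite case you could simply cite that result instead of redoing the modular-theoretic argument. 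But your proposed reduction to the $\sigma$-finite case does not work: a von Neumann algebra is in general \emph{not} a direct sum of $\sigma$-finite central summands. Any non-$\sigma$-finite factor, e.g.\ $\CL(\ell^2(I))$ with $I$ uncountable, has no nontrivial central projections at all, so there is no ``direct sum over $\sigma$-finite central projections'' to split along. And the rigidity statement without a $\sigma$-finiteness hypothesis is exactly what the paper points out, in the remark following the theorem, to be unknown; it cannot be taken for granted.

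This is why the actual proof is longer than your sketch suggests: instead of central projections it takes an increasing net $e_i\uparrow 1$ of $\sigma$-finite (non-central) projections, applies Connes's theorem to the corners $e_iM_1e_i\to f_iM_2f_i$ (whose standard forms live on $e_ie_i^\bt\KH_1$, etc.), uses Lemma \ref{lem:st-sp-to-proj}(c) to check that the resulting Jordan $^*$-isomorphisms $\Lambda_i$ agree on overlaps, and then extends their union by weak$^*$-continuity, invoking \cite[Corollary 5]{Kad52} to see that the limiting order isomorphism is Jordan. To repair your argument you would need either this corner-and-gluing step or an honest proof of the rigidity statement in the non-$\sigma$-finite case. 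The remaining ingredients of your proposal --- the intertwining $\rho J_1=J_2\rho$, the formula for $\Theta$, the verification $\Phi=\Theta_*|_{\KS(M_1)}$ using $J_2(\rho\xi_\mu)=\rho\xi_\mu$, and the uniqueness of $\Theta$ --- are all fine.
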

\begin{prf}
One direction of the equivalence follows from Proposition \ref{prop:J-preserves-U-R-TP}.
For the opposite direction, we assume in the following that $\Phi$ preserves the Raggio transition probability.

Notice that because of Relation \eqref{eqt:orth<->van-tran-prob}, the map $\Phi$ is biorthogonality preserving, and Lemma \ref{lem:st-sp-to-proj} gives an orthoisomorphism $\check \Phi: \Proj(M_1)\to \Proj(M_2)$.
Moreover, by Relations \eqref{eqt:Rag-tran-prob<->d2} as well as \eqref{eqt:d2<->Hil}, the map $\varphi: S_{\KP_1}\to S_{\KP_2}$ given by
\begin{equation*}\label{eqt:def-varphi}
\varphi(\xi_\mu):= \xi_{\Phi(\mu)} \qquad (\mu\in \KS(M_1)
\end{equation*}
is a metric preserving surjection, and Proposition \ref{prop:isom-cone>Jord} tells us that it extends to a complex linear isometry $\bar \varphi:\KH_1\to \KH_2$ satisfying $\bar \varphi(\KP_1) = \KP_2$.

By considering finite sums of elements in $\Proj_\sigma(M_1)$, one obtains, through \eqref{eqt:proj-decomp}, an increasing net
$\{e_i\}_{i\in \KI}$ of $\sigma$-finite projections such that $e_i\uparrow 1$.
Let us put $f_i:= \check \Phi(e_i)$ ($i\in \KI$).
Then all $f_i$ are $\sigma$-finite and $f_i\uparrow 1$ (because $\check \Phi$ is an orthoisomorphism).

By Corollary 2.5 and Lemma 2.6 of \cite{Haag-st-form}, the standard form for $e_i M_1 e_i$ is
$$
\big(e_i M_1 e_i, e_ie_i^\bt\KH_1, e_ie_i^\bt\KP_1, e_ie_i^\bt J_1 e_ie_i^\bt\big)
$$
(observe that $e_i^\bt xe_i^\bt \eta = xe_i^\bt \eta = x\eta$, whenever $x\in e_iM_1e_i, \eta\in e_ie_i^\bt \KH_1$).
In a similar way,
$\big(f_i M_2 f_i, f_if_i^\bt\KH_2, f_if_i^\bt\KP_2, f_if_i^\bt J_2 f_if_i^\bt\big)$ is the standard from of $f_i M_2 f_i$.

We identify, as usual,
$$\KS(e_iM_1e_i) \cong e_i \KS(M_1)e_i \cap \KS(M_1) = F_0(1-e_i)$$
and $\KS(f_iM_2 f_i) \cong F_0(1-f_i)$ in the canonical ways.
From this, the map $\Phi$ induces, through Lemma \ref{lem:st-sp-to-proj}(a), a bijection $\Phi_i: \KS(e_iM_1e_i) \to \KS(f_i M_2 f_i)$.
For each $\mu\in \KS(e_iM_1e_i)$, let $\xi^i_\mu\in S_{e_ie_i^\bt \KP_1}$ be the element with
$\mu(x) = \la x\xi^i_\mu, \xi^i_\mu\ra$ $(x\in e_i M_1 e_i)$.
Then
$$\mu(y) = \mu(e_i y e_i) = \la y\xi^i_\mu, \xi^i_\mu\ra \qquad (y\in M_1),$$
and the uniqueness of the element $\xi_\mu$ in $\KP_1$ satisfying \eqref{eqt:nor-st-P} implies that $\xi^i_\mu = \xi_\mu$.
Hence, if $\varphi_i: S_{e_ie_i^\bt \KP_1} \to S_{f_if_i^\bt \KP_2}$ is the bijection defined by $\varphi_i(\xi^i_\mu):= \xi^i_{\Phi_i(\mu)}$ ($\mu\in \KS(e_i M_1 e_i)$), we have $\varphi_i = \varphi|_{S_{e_ie_i^\bt \KP_1}}$.

The above shows that $\psi_i:=\bar\varphi|_{e_ie_i^\bt\KH_1}$ is a bijective isometry from $e_ie_i^\bt\KH_1$ to $f_if_i^\bt\KH_2$ with $\psi_i(e_ie_i^\bt\KP_1) = f_if_i^\bt\KP_2$.
Since both $e_iM_1 e_i$ and $f_iM_2 f_i$ are $\sigma$-finite, \cite[Th\'{e}or\`{e}me 3.3]{Connes-Cara} gives a Jordan $^*$-isomorphism $\Lambda_i:e_i M_1 e_i \to f_i M_2 f_i$ such that for every $x\in e_iM_1e_i$ and $\xi\in S_{e_ie_i^\bt\KP_1}$,
\begin{equation}\label{eqt:rel-Lambda-i-Phi-i}
\Phi(\omega_\xi)(\Lambda_i(x)) = \omega_{\varphi(\xi)}(\Lambda_i(x)) = \la \Lambda_i(x)\varphi_i(\xi), \varphi_i(\xi)\ra = \la x\xi, \xi\ra.
\end{equation}
In particular, one has $\Phi_i = (\Lambda_i^{-1})^*|_{\KS(e_i M_1 e_i)}$.

As in the beginning of the proof, $\Phi_i$ is biorthogonality preserving and induces an orthoisomorphism $\check \Phi_i: \Proj(e_i M_1 e_i)\to \Proj(f_i M_2 f_i)$ satisfying Relation \eqref{eqt:full-rel-Psi-F0}.
It then follows from
$$\Phi\big(F_0(p)\cap \KS(e_i M_1 e_i)\big) = F_0\big(\check \Phi(p)\big)\cap \KS(f_i M_2 f_i) \qquad (p\in \Proj(e_i M_1 e_i))$$
that $\check \Phi_i = \check \Phi|_{\Proj(e_i M_1 e_i)}$.
Thus, Lemma \ref{lem:st-sp-to-proj}(c) implies
$\Lambda_i|_{\Proj(e_i M_1 e_i)} = \check \Phi|_{\Proj(e_i M_1 e_i)}$.
From this, we know that whenever $i\leq j$, one has $\Lambda_j|_{\Proj(e_i M_1 e_i)} = \Lambda_i|_{\Proj(e_i M_1 e_i)}$, which ensures that
$$\Lambda_j|_{e_i M_1 e_i} = \Lambda_i.$$

Set $M_1^e:= \bigcup_{i\in \KI} e_i M_1 e_i$ and $M_2^f:= \bigcup_{i\in \KI} f_i M_2 f_i$.
The above allows us to define a Jordan $^*$-isomorphism $\Lambda_0:M_1^e \to M_2^f$ satisfying
$\Lambda_0|_{e_i M_1 e_i} = \Lambda_i$, and \eqref{eqt:rel-Lambda-i-Phi-i} gives
\begin{equation}\label{eqt:rel-Lambda-0-Phi}
\omega_{\varphi(\xi)}(\Lambda_0(x)) = \omega_\xi(x)
\qquad (x\in M_1^e, \xi\in \KP_1)
\end{equation}
because $\varphi$ is an isometry and $\bigcup_{i\in \KI} e_ie_i^\bt \KP_1$ is norm-dense in $\KP_1$.
We thus know from $\{\omega_{\varphi(\xi)}: \xi\in \KP_1\} = (M_2)_*^+$ that $\Lambda_0$ is weak$^*$-continuous.

On the other hand, since $e_iye_i \stackrel{w^*}{\longrightarrow} y$ for any $y\in M_1$ (see e.g.\ Remark \ref{rem:WOT=weak*}), $M_1^e$ is weak$^*$-dense in
$M_1$.
Hence, $\Lambda_0$ extends to a weak$^*$-continuous complex linear map $\Lambda:M_1 \to M_2$ such that $\Lambda(M_1^+)\subseteq M_2^+$, $\Lambda(1) = 1$ and, because of \eqref{eqt:rel-Lambda-0-Phi},
\begin{equation}\label{eqt:rel-Phi-Lambda}
\Phi(\mu)(\Lambda(x)) = \mu(x) \qquad (x\in M_1, \mu\in \KS(M_1)).
\end{equation}

Similarly, $\Phi^{-1}$ induces a positive linear map $\Upsilon: M_2\to M_1$ satisfying the corresponding property as \eqref{eqt:rel-Phi-Lambda}.
Clearly, $\Upsilon$ is the inverse of $\Lambda$, and $\Lambda$ is an order isomorphism.
By \cite[Corollary 5]{Kad52}, $\Lambda$ is a Jordan $^*$-isomorphism, and $\Theta := \Lambda^{-1}$ is the required map.
\end{prf}

%\medskip

The proof above can be shorten quite a bit if \cite[Th\'{e}or\`{e}me 3.3]{Connes-Cara} holds for the non-$\sigma$-finite case. 
However, this seems to be unknown. 
Note that even in the later work of \cite[Theorem VII.1.1]{Ioch-sd-cone}, which generalised \cite[Th\'{e}or\`{e}me 3.3]{Connes-Cara} to the case of $JBW^*$-algebras, the $\sigma$-finite assumption was still imposed.

%\medskip

The following corollary is a direct consequence of Theorems \ref{thm:st-sp-met-inv}(b) and \ref{thm:Raggio-tran-prop<->jord-isom} as well as Proposition \ref{prop:J-preserves-U-R-TP}.

%\medskip

\begin{cor}\label{cor:pre-Raggio>pre-Uhlmann}
If $\Phi:\KS(M_1)\to \KS(M_2)$ is a bijection preserving either the Raggio transition probability or the ``asymmetric transition probability'', then it preserves the Uhlmann transition probability as well.
\end{cor}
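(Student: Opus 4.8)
The plan is to notice that, under either hypothesis, $\Phi$ is forced to be the restriction to the normal state space of the dual of a Jordan $^*$-isomorphism, after which Proposition \ref{prop:J-preserves-U-R-TP} does all the remaining work.

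Concretely, suppose first that $\Phi$ preserves the Raggio transition probability. Then Theorem \ref{thm:Raggio-tran-prop<->jord-isom} furnishes a Jordan $^*$-isomorphism $\Theta\colon M_2\to M_1$ with $\Phi = \Theta_*|_{\KS(M_1)}$. Since every Jordan $^*$-isomorphism of von Neumann algebras is weak$^*$-continuous (it is a combination of a $^*$-isomorphism and a $^*$-anti-isomorphism, cf.\ the proof of Proposition \ref{prop:J-preserves-U-R-TP}), the predual map $\Theta_*$ is exactly the restriction of the Banach-space adjoint $\Theta^*$ to the normal states, so $\Phi = \Theta^*|_{\KS(M_1)}$. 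Proposition \ref{prop:J-preserves-U-R-TP} asserts that $\Theta^*|_{\KS(M_1)}$ preserves the Uhlmann transition probability, and hence so does $\Phi$.

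Next, suppose instead that $\Phi$ preserves the ``asymmetric transition probability'' $P_0$. Then Theorem \ref{thm:pre-tran-prop>jord-isom}(b) again produces a Jordan $^*$-isomorphism $\Theta\colon M_2\to M_1$ with $\Phi = \Theta_*|_{\KS(M_1)}$, and the identical reasoning as above---identifying $\Theta_*$ with $\Theta^*|_{\KS(M_1)}$ via weak$^*$-continuity of $\Theta$ and then invoking Proposition \ref{prop:J-preserves-U-R-TP}---shows that $\Phi$ preserves $P_U$.

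There is essentially no obstacle here: the statement is a formal corollary of results already in place. The one point deserving a word of care is the identification $\Theta_* = \Theta^*|_{\KS(M_1)}$, which is immediate from the weak$^*$-continuity of $\Theta$; alternatively one could sidestep it entirely by observing that Proposition \ref{prop:J-preserves-U-R-TP} could just as well have been phrased in terms of the predual map. Either way, the proof reduces to chaining Theorems \ref{thm:pre-tran-prop>jord-isom}(b) and \ref{thm:Raggio-tran-prop<->jord-isom} with Proposition \ref{prop:J-preserves-U-R-TP}.
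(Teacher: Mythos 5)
Your proof is correct and follows exactly the paper's route: the corollary is stated there as a direct consequence of Theorem \ref{thm:pre-tran-prop>jord-isom}(b), Theorem \ref{thm:Raggio-tran-prop<->jord-isom} and Proposition \ref{prop:J-preserves-U-R-TP}. Your extra remark reconciling $\Theta_*$ with $\Theta^*|_{\KS(M_1)}$ via weak$^*$-continuity is a reasonable point of care that the paper leaves implicit.
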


It is natural to ask if the converse of the above holds.
This lead to the following question.

\begin{question}\label{question:Hhlmann}
If $\Phi: \KS(M_1)\to \KS(M_2)$ is a bijection preserving the Uhlmann transition probability, can one find  a Jordan *-isomorphism
$\Theta: M_2\to M_1$ satisfying $\Phi = \Theta_*|_{\KS(M_1)}$?
\end{question}

Let us end this section with another application of our main results.
Here,  $d_{\|\cdot\|}$ denotes the metric on $\KP$ defined by the norm on $\KH$.

%\medskip

\begin{cor}\label{cor:comp-Jord-inv}
	Let $M$ be a von Neumann algebra.
	Each one of the following metric spaces: $(\KP, d_{\|\cdot\|})$, $(\KS(M), \dist_B)$, $(\KS(M), \dist_1)$ and $(\KS(M), \dist_2)$ is a complete Jordan $^*$-invariant for $M$.
\end{cor}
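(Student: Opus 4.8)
The plan is to deduce this corollary from the main theorems of the paper by unwinding what ``complete Jordan $^*$-invariant'' means: a construction $M \mapsto X(M)$ assigning to each von Neumann algebra a metric space is a complete Jordan $^*$-invariant precisely when, for any two von Neumann algebras $M_1, M_2$, the spaces $X(M_1)$ and $X(M_2)$ are isometric \emph{if and only if} $M_1$ and $M_2$ are Jordan $^*$-isomorphic. The ``only if'' direction (isometry of the metric spaces implies Jordan $^*$-isomorphism of the algebras) is exactly the content of Theorem \ref{thm:st-sp-met-inv}, while the ``if'' direction (Jordan $^*$-isomorphic algebras produce isometric metric spaces) follows from Proposition \ref{prop:J-preserves-U-R-TP} together with the defining formulas \eqref{eqt:def-d-B}, \eqref{eqt:Rag-tran-prob<->d2}, \eqref{eqt:d2<->Hil}.

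First I would handle $(\KS(M),\dist_1)$: a Jordan $^*$-isomorphism $\Theta:M_2\to M_1$ gives the isometric bijection $\Theta^*|_{\KS(M_1)}:\KS(M_1)\to\KS(M_2)$ since Jordan $^*$-isomorphisms are isometric, so $\|\Theta^*\mu-\Theta^*\nu\| = \|\mu-\nu\|$; conversely a $\dist_1$-isometry triggers Theorem \ref{thm:st-sp-met-inv}(1). Next, for $(\KS(M),\dist_B)$ and $(\KS(M),\dist_2)$: by Proposition \ref{prop:J-preserves-U-R-TP}, $\Theta^*|_{\KS(M_1)}$ preserves both $P_U$ and $P_R$, hence by \eqref{eqt:def-d-B} and \eqref{eqt:Rag-tran-prob<->d2} it is a $\dist_B$-isometry and a $\dist_2$-isometry; conversely a $\dist_B$-isometry forces $P_U$ to be preserved, in particular pairs with zero $P_U$ are preserved, so Theorem \ref{thm:st-sp-met-inv}(3) applies — and similarly a $\dist_2$-isometry invokes Theorem \ref{thm:st-sp-met-inv}(2). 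For $(\KP, d_{\|\cdot\|})$ one uses \eqref{eqt:d2<->Hil}: the map $\mu\mapsto\xi_\mu$ is an isometric bijection from $(\KS(M),\dist_2)$ onto $(\KP,d_{\|\cdot\|})$, so the $\KP$ case reduces to the $\dist_2$ case. Finally, every one of these four metric spaces is complete: $(\KP, d_{\|\cdot\|})$ is a norm-closed subset of the Hilbert space $\KH$ hence complete, and the other three are isometric to it (via $\dist_2 = d_{\|\cdot\|}$ under $\mu\mapsto\xi_\mu$ for $\dist_2$, and one notes $\dist_1$ and $\dist_B$ induce the same topology with $\KS(M)$ norm-closed in $M_*$); alternatively $(\KS(M),\dist_1)$ is complete because the normal state space is norm-closed in the Banach space $M_*$, and completeness transfers along the uniform equivalences relating $\dist_1$, $\dist_B$, $\dist_2$.

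I expect no genuine obstacle here — the corollary is essentially a repackaging — but the one point requiring a little care is verifying that a $\dist_B$-isometry (respectively $\dist_2$-isometry) really does carry the hypothesis of Theorem \ref{thm:st-sp-met-inv}, namely that it is not merely distance-preserving in the abstract but actually preserves the pairs with \emph{zero} transition probability; this is immediate from the monotone relations \eqref{eqt:def-d-B} and \eqref{eqt:Rag-tran-prob<->d2} since $\dist_B(\mu,\nu)$ (resp.\ $\dist_2(\mu,\nu)$) attains its maximal value $\sqrt{2}$ exactly when the corresponding transition probability vanishes. The completeness claim is the only part not literally covered by the earlier results, and it follows from the standard fact that $\KS(M)$ is norm-closed in $M_*$.
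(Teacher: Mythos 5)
Your argument is correct and follows essentially the paper's own proof: Theorem \ref{thm:st-sp-met-inv} together with Proposition \ref{prop:J-preserves-U-R-TP} and the formulas \eqref{eqt:def-d-B}, \eqref{eqt:Rag-tran-prob<->d2}, \eqref{eqt:d2<->Hil} handle the three state-space metrics, and Proposition \ref{prop:isom-cone>Jord} (equivalently, the identification $\mu\mapsto\xi_\mu$) reduces the cone case to the $\dist_2$ case. The only superfluous part is your discussion of metric completeness: in the phrase ``complete Jordan $^*$-invariant'' the word ``complete'' qualifies ``invariant'' (the metric space determines $M$ up to Jordan $^*$-isomorphism), not the metric space itself, so no completeness of the spaces needs to be verified.
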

\begin{prf}
The fact that $(\KS(M), \dist_1)$ is a complete Jordan $^*$-invariant for $M$ is already proved in Theorem \ref{thm:st-sp-met-inv}.
Moreover, it follows from Theorem \ref{thm:st-sp-met-inv} and Relation \eqref{eqt:Rag-tran-prob<->d2} (respectively, \eqref{eqt:def-d-B}) that $(\KS(M), \dist_2)$ (respectively, $(\KS(M), \dist_B)$) is a complete Jordan $^*$-invariant.
Consequently, $(\KP, d_{\|\cdot\|})$ is also a complete Jordan $^*$-invariant because of Proposition \ref{prop:isom-cone>Jord}.
\end{prf}

%\medskip

Furthermore, if $\dim M \geq 2$, then the metric space $B_\KP:=\{\xi\in \KP: \|\xi\|\leq 1\}$ (under the metric induced by the norm on $\KH$) is also a complete Jordan $^*$-invariant for $M$.
In fact, it is not hard to see that the set of extreme points of $B_\KP$ is $S_\KP\cup \{0\}$.
Thus, using the fact that $S_{\KP_2}$ is not a singleton set, a continuity argument  will verify that if $\phi: B_{\KP_1}\to B_{\KP_2}$ is a distance preserving bijection, then $\phi(0) = 0$.
One can find the details of this, as well as its generalization to all non-commutative $L_p$-spaces ($p\in (1,\infty)$), in our further work on
this subject (\cite{LNW-nc-Lp}).

\section{Acknowledgement}

The authors are supported by National Natural Science Foundation of China (11471168) and Taiwan MOST grant (102-2115-M-110-002-MY2).

%\bigskip

\bibliographystyle{plain}

\end{document}